\documentclass[letterpaper, 10 pt, conference]{ieeeconf}  

\IEEEoverridecommandlockouts                              

\usepackage{amssymb,amsfonts,amsmath,amsthm,amscd}
\usepackage{epsfig,graphics,psfrag}

\newtheorem{thm}{Theorem}

\newtheorem{lemma}{Lemma}

\newtheorem{definition}{Definition}

\def\prob{\mathbb P}
\def\Prob{\mathbb P}
\def\E{\mathbb E}

\def\|{\big|\big|}

\def\u0t{{\tt \underline{0}}}
\def\0t{{\tt 0}}
\def\1t{{\tt 1}}
\def\H{{\mathbb H}}

\def\ind{{\mathbb I}}

\def\reals{{\mathbb R}}

\def\di{{\partial i}}

\def\dmi{{\partial_- i}}

\def\naturals{{\mathbb N}}
\def\dpl{{\partial_+}}
\def\dmn{{\partial_-}}

\def\T{{\sf T}}
\def\F{{\sf F}}
\def\Fh{\widehat{\sf F}}
\def\Z{{\mathbb Z}}
\def\block{{\rm P}_{{\rm \footnotesize err}}}

\def\ve{\varepsilon}
\def\eps{\epsilon}

\def\Typ{{\sf T}}
\def\xh{\widehat{x}}
\def\zh{\widehat{z}}
\def\yh{\widehat{y}}
\def\bm{m_*}
\def\bp{\overline{p}}

\def\vl{\vec{\lambda}}
\def\vW{\vec{W}}
\def\vn{\vec{n}}
\def\T{{\sf T}}

\newcommand{\comment}[1]{}

\title{\Large \bf Detailed Network Measurements Using Sparse Graph Counters:\\
 The Theory}

\author{Yi Lu, Andrea Montanari and Balaji Prabhakar
\thanks{Yi Lu is with the Department of Electrical Engineering,
Stanford University, {\tt\small yi.lu@stanford.edu}. Andrea
Montanari is with Departments of Electrical Engineering and
Statistics, Stanford University, {\tt\small
montanari@stanford.edu}. Balaji Prabhakar is with the Department
of Electrical Engineering, Stanford University, {\tt\small
balaji@stanford.edu}.} }

\begin{document}

\maketitle

\thispagestyle{empty}
\pagestyle{empty}

\begin{abstract}

Measuring network flow sizes is important for tasks like accounting/billing,
network forensics and security.  Per-flow accounting is considered hard
because it requires that many counters be updated at a very high speed; however,
the large fast memories needed for storing the counters are prohibitively
expensive.  Therefore, current approaches aim to obtain approximate flow counts;
that is, to detect large {\em elephant} flows and then measure their sizes.

Recently the authors and their collaborators have developed \cite{our-tech-report}
a novel method for per-flow traffic measurement that is fast, highly memory efficient
and accurate.   At the core of this method is a novel counter architecture called
``counter braids.''  In this paper, we analyze the performance of the counter
braid architecture under a Maximum Likelihood (ML) flow size estimation algorithm
and show that it is optimal; that is, the number of bits needed to store the
size of a flow matches the entropy lower bound.  While the ML algorithm is
optimal, it is too complex to implement.  In \cite{our-tech-report} we have
developed an easy-to-implement and efficient message passing algorithm
for estimating flow sizes.

\end{abstract}
%
%
\section{Introduction}

This paper addresses a theoretical problem arising in a novel approach to
network traffic measurement the authors and their collaborators have
recently developed.  We refer the reader to \cite{our-tech-report} for
technological background, motivation, related literature and other details.
In order to keep this paper self-contained, we summarize the background and
restrict the literature survey to what is relevant for the results of this
paper.

\noindent{\bf Background.}  Measuring the sizes of network flows on high speed
links is known to be a technologically challenging problem
\cite{Varghese03}.  The nature of the
data to be measured is as follows:  At any given time several 10s or 100s of
thousands of flows can be active on core Internet links.  Packets arrive at
the rate of one in every 40-50 nanoseconds on these links which currently run
at 10 Gbps.  Finally, flow size distributions are heavy-tailed, giving rise to
the well-known decomposition of flows into a large number of short ``mice'' and
a few large ``elephants.''  As a rule of thumb, network traffic follows an ``80-20 rule'':
80\% of the flows are small, and the remaining 20\% of the large flows bring about
80\% of the packets or bytes.

This implies that measuring flow sizes accurately requires a large array of counters
which can be updated at very high speeds, and a good counter management algorithm
for updating counts, installing new counters when flows initiate and uninstalling
them when flows terminate.

Since high-speed large memories are either too expensive or simply infeasible with
the current technology, the bulk of research on traffic measurement has focused
on approximate counting methods.  These approaches aim at detecting elephant flows and
measuring their sizes.

\noindent{\bf Counter braids.}  In \cite{our-tech-report} we develop
a novel counter architecture, called ``counter braids'', which is fast, very efficient
with memory usage and gives an accurate measurement of {\em all} flow sizes, not just
the elephants.  We will briefly review this architecture using the following simple
example.

Suppose we are given 5 numbers and are told that four of them are no more than
2 bits long while the fifth can be 8 bits long.  We are not told which is which!

Figures \ref{fig:SimpleCounters} and \ref{fig:CounterBraid} present two approaches
for storing the values of the 5 numbers.  The first one corresponds to a traditional
array of counters, whereby the same number of memory registers is allocated
to each measured variable (flow). The structure in Fig.~\ref{fig:CounterBraid}
is more efficient in memory, but retrieving the count values is less straightforward,
requiring a flow size estimation algorithm.

\begin{figure}[h]
\center{\includegraphics[width=5.6cm]{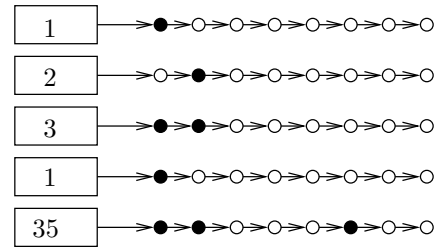}}
\put(-148,80){$1$}
\put(-148,61.5){$2$}
\put(-148,42){$3$}
\put(-148,23.5){$1$}
\put(-152,4){$35$}
\caption{{\small A simple counter structure: to each flow size we associate its
binary representation (filled circle $=1$, empty circle $=0$).}}
\label{fig:SimpleCounters}
\end{figure}
\begin{figure}[h]
\center{\includegraphics[width=5.9cm]{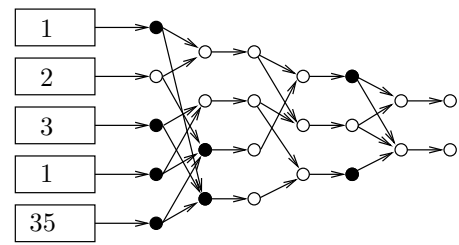}}
\put(-158,78){$1$}
\put(-158,59.5){$2$}
\put(-158,41){$3$}
\put(-158,23){$1$}
\put(-162,4.5){$35$}
\caption{{\small Counters braid.}}\label{fig:CounterBraid}
\end{figure}

Viewed from an information-theoretic perspective, the design of an
efficient counting scheme and a good flow size estimation is equivalent
to the design of an efficient {\em source code} \cite{CoverThomas}.
However, the applications we consider impose a
stringent  constraint on such a code: each time the size of a flow
changes (because a new packet arrives) a small number of operations must
be sufficient to update the stored information.
This is not the case with standard source codes, where changing a single letter
in the source  stream may alter completely the compressed version.

In this paper we prove that, under a probabilistic model for the
flow sizes (namely that they form a vector of iid random variables),
counter braids achieve a compression rate equal to the entropy of the flow
sizes distribution,
in the large system limit. Namely, for any rate larger than
the flow entropy, the flow sizes can be recovered
from the counter values, with error probability vanishing in the large system
limit.
Further, we prove optimal compression can be achieved by
using braids that are \emph{sparse}.
The result is  non-obvious, since counter braids form a
pretty restrictive family of architectures.

Our treatment makes use of techniques from the theory of low-density
parity check codes, and the whole construction is inspired by that of LDPC
\cite{GallagerThesis,MCT}.
The construction of LDPC codes has an analogy in the source coding problem thanks to standard
equivalence between coding over discrete memoryless
symmetric channels, and compressing iid discrete random variables
\cite{CaireEtAl}.
However, the key ideas in the present paper have been developed to deal with
the problem that the flow sizes are \emph{a priori} unbounded.
In the channel coding language, this would be equivalent to use
a countable but infinite input alphabet.

Finally, we insist on using sparse braids for two reasons.
First, this allows the stored values to be updated with a \emph{small}
(typically bounded) number of operations.
Second, it is easy to realize that ML decoding of counter braids
is NP-hard, since it has ML decoding of linear codes as a special case
\cite{Berlekamp}.
However, thanks to the sparseness of the underlying graph,
one may use iterative message passing techniques \cite{Factor}.
Indeed, a simple message passing algorithm for estimating flow sizes is
described and analyzed using real and synthetic network traces in \cite{our-tech-report}.
%
%
\section{Counter Braids: Basic definitions}

\begin{definition}
A \emph{counter braid} is a couple $(G,q)$ where $q\ge 2$ is an
integer (register capacity) and $G$ is a directed acyclic graph on
vertex sets $I$ (input nodes) and $R$ (registers), with the input
nodes having in-degree zero. We  write $G=(I,R,E)$, with $E$ the
set of directed edges in $G$.

For any node $i\in I\cup R$, we will denote by $\dpl i\equiv
\{j:\; (i,j)\in E\}$ the set of descendants of $i$, and by $\dmn
i\equiv \{j:\; (j,i)\in E\}$ the set of parents of $i$. Finally
$\di\equiv \dpl\cup\dmi$.
\end{definition}
In the following we shall often omit the explicit reference to the
register capacity and write $G$ for $(G,q)$. The input size of the
braid is $|I|\equiv n$, and its storage size $|R|\equiv m$. An
important parameter is its rate, which we measure in bits
\begin{eqnarray}
r= \frac{|R|\log_2q}{|I|}\, .
\end{eqnarray}
We will say that a sequence of counters braids $\{G_n = (I_n,R_n,E_n)\}$
is $\emph{sparse}$ if the number of edges per input node
$|E_n|/|I_n|$ is bounded.

\begin{definition}
A \emph{state} (or \emph{configuration} of the counter braid $G_q$, with is
an assignment $(x,y)$ of non-negative integers to the nodes in
$G$, with $x = \{x_i:\, i\in I\}\in\naturals^I$, and
$y=\{y_j:\, j\in R\}\in\naturals^R$. The state $(x,y)$
is \emph{valid} if $y_j\in\{0,\dots,q-1\}$ for any
register $j\in  R$.
\end{definition}
Notice that a valid register configuration can be regarded as an
element of $(\Z_q)^R$ (where $\Z_q$ is the group of integers
modulo $q$.) We denote by $0$ the zero vector in $\naturals^{K}$.

We want now to describe the braid behavior when one
of the input nodes is incremented by one unity (i.e. when
a packet arrives at input node $i\in I$.)
Assume the braid $(G,q)$ to be in a valid state $(x,y)$.
Given $i\in I$, we define the new state $(x',y') = \T_i(x,y)$
by letting $x'_i = x_i+1$, $x'_j=x_j$ for any $j\neq i$,
and $y'$ be defined by the following procedure.
\begin{table}[h]
{\normalsize
\begin{tabular}{ll}
\hline
\multicolumn{2}{l}{REGISTERS UPDATE ({\sc input}: flow index $i$)}\\
\hline
&\vspace{-0.2cm}\\

1: & $y_j(0)=y_j$ for $j\not\in \dpl i$,\\
   & \hspace{0.1cm} and $y_j(0)=y_j+1$
otherwise.\\
2: & Set $t=0$.\\
3: & {\bf while} $y(t)$ is not valid\\
4: & \hspace{0.5cm}Let $j\in R$ be such that $y_j(t)\ge q$;\\
5: & \hspace{0.5cm}Set $y_j(t+1)=y_j(t)-q$;\\
6: & \hspace{0.5cm}For any $l\in\dpl j$, set $y_l(t+1)=y_l(t)+1$;\\
7: & \hspace{0.5cm}For any $l\in R\setminus \{j,\dpl j\}$, set
$y_l(t+1)=y_l(t)$;\\
8: & \hspace{0.5cm}Increment $t:=t+1$;\\
9:& {\bf end}\\
10:& {\bf return} $y(t)$.\\
\hline
\end{tabular}}
\end{table}
Notice that this definition is ambiguous in that
we did not specify which register to pick among the ones with $y_j(t)\ge q$
at step 4 in the registers update routine. However this is not necessary, as
stated in the following lemma (the proof is omitted from this extended abstract).
\begin{lemma}
The update procedure above halts after a finite number of steps.
Further its output $\T_i(x,y)$ does not depend on
the order of update of the registers.
\end{lemma}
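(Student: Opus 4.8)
The plan is to analyze the update procedure as a kind of ``carry propagation'' process on the DAG $G$ and to exploit the acyclicity crucially. First I would establish halting. Introduce a potential function that decreases at every iteration of the {\bf while} loop: since $G$ is a directed acyclic graph, assign to each register $j\in R$ a ``height'' $h(j)$ equal to the length of the longest directed path in $G$ starting at $j$ (this is well-defined and finite because $G$ is acyclic and finite). Each time step~4--7 fires on a register $j$, we subtract $q$ from $y_j$ and add $1$ to each strict descendant $l\in\dpl j$, all of which have $h(l)<h(j)$. So the quantity $\Phi(y(t)) = \sum_{j\in R} y_j(t)\, (q+1)^{h(j)}$ (say) strictly decreases: the term for $j$ drops by $q(q+1)^{h(j)}$, while the total increase over the descendants is at most $|\dpl j|\cdot (q+1)^{h(j)-1}$, which one checks is smaller provided the weight base is chosen large enough (e.g.\ base $> \max_j |\dpl j|/q + 1$). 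Since $\Phi$ takes values in a well-ordered set (nonnegative integers, after clearing denominators) and strictly decreases, the loop terminates. I would also note that overflow can only occur at registers of positive height or at the initially-incremented registers, so the process is finite in an elementary way.

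For the order-independence, the natural approach is a confluence / local-commutation argument (a ``diamond lemma''). Think of each overflow event as a rewriting step $y \to y'$; the procedure terminates when no step applies, i.e.\ at a normal form. I would show: (i) if two distinct registers $j\neq k$ are both overflowed at some state $y(t)$ (both $\ge q$), then firing $j$ then $k$ yields the same state as firing $k$ then $j$ --- this is immediate since the two operations touch disjoint coordinates in $\dpl j\cup\{j\}$ and $\dpl k\cup\{k\}$ in an additive way, \emph{and}, crucially because $G$ is acyclic, firing $j$ cannot push $k$ back below $q$ and vice versa (descendants only gain $+1$; a register is never decremented except when it is itself fired). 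Actually one must be slightly careful: firing $j$ could make some register that was below $q$ go to $\ge q$, but it never removes an existing overflow at $k\neq j$, so the set of ``pending'' overflows is only enlarged by moves, never shrunk except at the fired node. Combined with termination (Newman's lemma: a terminating, locally confluent rewriting system is confluent), local commutation upgrades to a unique normal form, which is exactly $\T_i(x,y)$ independent of all choices.

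The main obstacle I anticipate is pinning down local confluence cleanly in the presence of the subtlety that a single overflow event can create \emph{new} overflows further down the DAG. The clean way around this is to not reason state-by-state but to track, for each register $j$, the total number $c_j$ of times it overflows during the whole run. I would argue that $(c_j)_{j\in R}$ is uniquely determined by a fixed-point/conservation identity: $y_j^{\text{final}} = y_j^{(0)} + \sum_{k\in\dmn j} c_k - q\, c_j$ together with $0\le y_j^{\text{final}}\le q-1$. Processing registers in any order consistent with a topological sort of $G$ shows existence and simultaneously forces $c_j = \big\lfloor (y_j^{(0)} + \sum_{k\in\dmn j} c_k)/q \big\rfloor$, which determines the $c_j$ uniquely by induction on height (the parents of $j$ have strictly greater height, so their $c_k$ are already fixed). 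Since the final state is a deterministic function of $y^{(0)}$ and $(c_j)$, it is the same for every valid update order, and in particular equals the output of the given (arbitrary-order) procedure. This height-induction argument is where acyclicity does all the work, and it simultaneously re-proves termination, so it may be cleanest to present it as the single unified argument.
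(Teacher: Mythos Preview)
The paper itself omits the proof of this lemma (``the proof is omitted from this extended abstract''), so there is nothing to compare against. Your proposal is correct and complete as a proof sketch; either of your two routes (potential-function termination plus Newman's lemma, or the conservation-law/firing-count argument) would fill the gap, and your closing observation that the second subsumes the first is well taken.

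A few small remarks. First, the update procedure here is exactly a chip-firing (abelian sandpile) dynamics on a DAG, and your firing-count argument is the standard proof of the ``abelian'' property in that setting; it may be worth saying so explicitly. Second, in the potential-function argument you should simply fix an integer base $B>\max_{j}|\dpl j|/q$ from the outset, so that $\Phi$ is a nonnegative integer that strictly decreases; this avoids the aside about ``clearing denominators.'' Third, in your induction for uniqueness of the $c_j$, the direction is downward in your height $h(\cdot)$ (equivalently, along a topological order of $G$), and the base case consists of registers with no register parents; you have this right, but it is worth stating the base case explicitly. None of these are gaps---the argument is sound.
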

With an abuse of notation we shall write $x' = \T_i(x)$, $y' = \T_i(y)$,
when $(x',y') = \T_i(x,y)$.

When input values $x$ are incremented sequentially, the stored
information $y$ is updated according to the above procedure. From
now on we shall take a static view and assume a certain input $x$.
The corresponding stored information $y$ is obtained through the
mapping defined below.
\begin{definition}
Given a counter braid $(G,q)$, the associated \emph{storage
function} $\F_G:\naturals^I\to \Z_q^R$ returns, for any input
configuration $x\in \naturals^I$ a register configuration
$y=\F_G(x)\in \Z_q^R$ defined as follows. Let $x^{(0)}=0$,
$x^{(1)},\dots$, $x^{(N)}=x$ be a sequence of input configurations
such that $x^{(s+1)}$ is obtained from $x^{(s)}$ by incrementing
its entry $i(s)$. Then
\begin{eqnarray}
\F_G(x) \equiv \T_{i(N)}\circ\T_{i(N-1)}\circ\cdots\circ \T_{i(1)}(0)\, .
\end{eqnarray}
\end{definition}
We shall drop the subscript $G$ from $\F_G$ whenever clear from the context.
A priori it is not obvious that the mapping $\F_G$ is well defined.
In particular, it is not obvious that it does not depend on the order
in which input values are incremented, i.e. on the sequence
$\{i(1),\dots,i(N)\}$. This is nevertheless the case
(the proof is omitted.)

\begin{definition}
Given a counter braid $(G,q)$, a \emph{reconstruction} (or \emph{decoding})
\emph{function} is a function $\Fh :\Z_q^R\to\naturals$.
\end{definition}
%
%
\subsection{Main results}

Throughout this paper, we shall model the input
values as iid integer random variables $(X_1,\dots,X_n)\equiv X$
(identifying $V=[n]$) with common distribution $p$.
The (binary) entropy of this distribution will be denoted by
$H_2(p) \equiv -\sum_x p(x)\log_2 p(x)$.
\begin{definition}
A sequence of  counters braids $\{G_n = (I_n,R_n,E_n)\}$, with
$|I_n|=n$ has
\emph{design rate} $r$ if
\begin{eqnarray}
r = \lim_{n\to\infty} \frac{|R_n|}{|I_n|}\log_2 q\, .
\end{eqnarray}
It is \emph{reliable} for the distribution $p$
if there exists a sequence of
reconstruction functions $\Fh_n\equiv \Fh_{G_n}$ such that,
for $X$ a random input and $Y\equiv\F_{G_n}(X)$
\begin{eqnarray}
\block(G_n,\Fh_n) \equiv \prob\{\Fh_n(Y)\neq X\}\stackrel{n}{\to} 0\, .
\end{eqnarray}
\end{definition}

Shannon's source coding theorem implies that there cannot exist
reliable counter braids with asymptotic rate $r<h_2(p)$. However,
the achievability of such rates is far from obvious, since counter
braids are a fairly specific compression scheme. The main theorem
of this paper establishes achievability, even under the
restriction that the braid is sparse.

In order to avoid technical complication, we make two assumptions on the input
distribution $p$:
\begin{enumerate}
\item It has \emph{at most power-law tails}. By this we mean that
$\prob\{X_i\ge x\}\le Ax^{-\epsilon}$ for some $\epsilon>0$.
\item It has \emph{decreasing digit entropy}. Let
$X_i = \sum_{a\ge 0} X_i(a)q^a$ be the $q$-ary expansion of $X_i$,
and $h_l$ be the $q$-ary
entropy of $X_i(l)$. Then $h_l$ is monotonically decreasing in
$l$ for any $q$ large enough.
\end{enumerate}
We call a distribution $p$ with this two properties \emph{admissible}.
While this class does not cover all possible distributions, it is likely
to include any case of practical interest.
\begin{thm}\label{thm:MainThm}
For any admissible input distribution $p$, and any rate
$r>H_2(p)$ there exist a sequence of reliable sparse
counter braids with asymptotic rate  $r$.
\end{thm}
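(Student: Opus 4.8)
The plan is to reduce the counter-braid compression problem to a sequence of LDPC-like source-coding subproblems, one for each $q$-ary digit level, and then invoke known density-evolution / second-moment arguments from the theory of sparse-graph codes. The key observation is that the storage function $\F_G$, although defined by a nonlinear carry-propagation procedure, is morally a \emph{layered} linear map over $\Z_q$: the carries generated when register $j$ overflows feed into its descendants $\dpl j$, so the graph $G$ decomposes into levels and the value stored at level $l$ is (a linear function over $\Z_q$ of) the sum of the level-$l$ input digits plus carries from level $l-1$. I would first make this precise by writing, for an input $x$ with $q$-ary expansion $x_i=\sum_a x_i(a)q^a$, a recursion $y^{(l)} = A\,(x^{(l)} + c^{(l-1)}) \bmod q$ where $A$ is the $\{0,1\}$ biadjacency matrix of one braid layer and $c^{(l-1)}$ is the carry vector from the previous layer; the admissibility hypothesis (power-law tails) guarantees that only $O(\log n)$ levels are non-trivial with high probability, so a union bound over levels is affordable.

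Second, I would design the random sparse braid by choosing, independently at each level, a bipartite graph from a degree distribution pair $(\ledge,\redge)$ in the usual LDPC ensemble, with the number of registers at level $l$ proportional to the residual digit entropy $h_l$ (this is where ``decreasing digit entropy'' is used: it lets one allocate fewer registers to higher, sparser digits while keeping the per-level rate above $h_l$, and the geometric-type decay makes $\sum_l (\text{registers at level } l)$ converge to a rate arbitrarily close to $H_2(p)=\sum_l h_l$ times the obvious normalization). The decoding function $\Fh_n$ is the obvious layered decoder: ML-decode (or, for the statement as phrased, simply take the maximum-likelihood input consistent with the observed registers at the current level, given the already-decoded lower levels to compute carries), then pass the carries up. Because the theorem only asserts existence of \emph{some} reconstruction function with vanishing block error, I may use ML decoding at each layer rather than message passing.

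Third, the core technical estimate is a Shannon-style / typical-set argument for a single layer: for the random bipartite ensemble with $m_l$ registers and design rate $r_l > h_l + \delta$, the probability that two distinct ``typical'' digit vectors $x^{(l)}, \tilde x^{(l)}$ (consistent with the same carries in) collide under $y^{(l)} = A x^{(l)} \bmod q$ is exponentially small, so a first-moment (union) bound over all typical competitors shows the layer decoder succeeds with probability $1 - e^{-\Theta(n)}$, uniformly once $q$ is large. This is the standard argument that random linear maps over $\Z_q$ compress i.i.d. sources to the entropy; the only wrinkle is that $A$ is sparse rather than fully random, which is handled exactly as in the LDPC source-coding literature (e.g. via the fact that for suitable $(\ledge,\redge)$ the sparse ensemble still has the required distance/expansion property, or by passing through the dense ensemble and then sparsifying while controlling the rate loss). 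I would then combine the per-layer success probabilities with the $O(\log n)$ bound on the number of active layers to conclude $\block(G_n,\Fh_n)\to 0$.

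The main obstacle I expect is not the information-theoretic counting but rather \emph{controlling the carries}: the map from inputs to registers is not genuinely linear because a register overflowing multiple times generates multiple carries, and the carry vector $c^{(l-1)}$ is itself a random, data-dependent quantity whose support and magnitude must be bounded. Concretely I will need: (i) a deterministic or high-probability bound on how large the $y_j(t)$ can get during carry propagation — i.e. that carries stay bounded, using sparsity of $G$ and the acyclicity/layered structure, so that $c^{(l-1)}$ has entries in a bounded range; and (ii) a coupling argument showing that conditioning on the (correctly decoded) lower layers does not destroy the i.i.d.-like structure of $x^{(l)}$ that the typical-set bound needs. Managing the interaction between the randomness of the graph, the randomness of the input, and the deterministic-but-complicated carry dynamics — and getting all the error terms to be $o(1)$ after the union over $O(\log n)$ layers — is where the real work lies; the rest is an adaptation of textbook LDPC source-coding arguments.
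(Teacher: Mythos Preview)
Your high-level strategy matches the paper's: layer the braid, decode the $q$-ary digits $x(0),x(1),\dots$ sequentially by typical-set decoding, size layer $l$ proportionally to the digit entropy $h_l$, and union-bound over $O(\mathrm{poly}\log n)$ levels. Where you go off track is the linear algebra of the carries. Your recursion $y^{(l)}=A\bigl(x^{(l)}+c^{(l-1)}\bigr)\bmod q$ with $A$ the biadjacency matrix of a \emph{single} layer is not what the braid computes, and indeed is not even dimensionally consistent for $l\ge 2$: the digit vector $x(l)$ lives in $\naturals^n$, while the $l$-th layer matrix is $m_l\times m_{l-1}$. The correct identity (the paper's Lemma~\ref{lemma:Expansion} and the discussion in Section~\ref{sec:Recovery}) is that, once $x(0),\dots,x(l-1)$ are known, one has
\[
y^{(l+1)}-\yh^{(l+1)}\;=\;\H^{l+1}\,x(l)\ \bmod q,\qquad \H^{l+1}\equiv \H_{l+1}\H_l\cdots\H_1,
\]
i.e.\ the relevant matrix is the \emph{product} of all layer matrices up to level $l{+}1$. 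This makes your ``main obstacle'' evaporate: the carries are absorbed into $\yh^{(l+1)}$, which is a deterministic function of the already-decoded digits, so there is no magnitude bound to prove and no conditioning issue---$x(l)$ is genuinely an iid vector with law $p_l$, and the per-layer problem is exactly a linear syndrome equation over $\Z_q$.

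Conversely, the step you dismiss as ``standard LDPC source-coding'' is where the real work sits. Because the matrix in the syndrome equation is $\H^{l+1}$, a product of sparse random matrices from a nonstandard ensemble (each input chooses $k$ iid uniform neighbors, with repetition), you cannot simply cite an off-the-shelf distance/expansion result. The paper handles this by a telescoping decomposition: an error at stage $l$ means $\H^t(x'-x)=0$ but $\H^{t-1}(x'-x)\neq 0$ for some $t\le l{+}1$, which reduces to $\H_t z=0\bmod q$ for a \emph{single} sparse layer matrix $\H_t$ and a nonzero vector $z$ of controlled weight. The substantive lemmas are then a Poisson-urn computation of $\prob\{\H_t z=0\}$ (Lemma~\ref{lemma:ProbSol}) and a low-weight codeword bound for this specific ensemble (Lemma~\ref{lemma:SmallWeigth}); these require $k$ large enough (depending on $q,\delta$) and are not covered by the regular-LDPC references. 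Your plan would go through once you correct the layer equation and recognize that the heavy lifting is in these enumerator bounds rather than in carry control.
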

As stressed above, we insist on the braid being sparse for two
reasons: $(i)$ It allows to update the registers content $y$ with a
small number of operations, whenever one entry of $x$ is incremented
(i.e. the storage function can be efficiently recomputed);
$(ii)$ It opens the way to using low-complexity message passing algorithms
for estimating the input vector $x$, given the stored information
(i.e. for evaluating the recovery function $\Fh_G$).
%
%
\section{The architecture}

\subsection{Layering}

We will consider \emph{layered} architectures. By this we mean that
the set of register is the disjoint union of $L$ layers
$R = R^1\cup R^2\cup\dots\cup R^{L}$ and that directed edges
are either from $I$ to $R^1$ or from $R^{l}$ to $R^{l+1}$
for some $l\in\{1,\dots,L-1\}$
(we shall sometimes adopt the convention $R^0\equiv I$).
We denote by $y^{(l)} = \{y_i:\, i\in R^l\}$ the vector of register
values in layer $l$. We further let $m_l\equiv|R^l|$ denote the size
of the $l$-th layer (with $m_0\equiv n$).

The graph structure is conveniently encoded in $L$
matrices $\H_1$, \dots $\H_L$, whereby $\H_l$ is the $m_l\times m_{l-1}$
adjacency matrix of the subgraph induced by $R^{l}\cup R^{l-1}$.
We further let $\H^l = \H_l\cdot \H_{l-1}\cdots \H_1$.
The storage function $\F$ can be characterized as follows.
\begin{lemma}\label{lemma:Expansion}
Consider an $L$-layers counters braid, let $x$ be its input,
and define the sequence of vectors $z^{(l)}\in \naturals^{R^l}$,
by $z^{(0)} = x$ and
\begin{eqnarray}
z^{(l)} = \lfloor (\H_l z^{(l-1)})/q \rfloor\, .\label{eq:ZRecursion}
\end{eqnarray}
(the division and floor operation being component-wise on the vector
$\H_l z^{(l-1)}$.) Then, the register values are
$y^{(l)}= \H_l z^{(l-1)} \mod q$.
\end{lemma}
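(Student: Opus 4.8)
The plan is to prove this by induction on $L$, reducing the $L$-layer claim to a statement about a single layer. The key observation is that the register-update routine, applied at layer $l$, is precisely the schoolbook ``carry'' procedure: when the incoming quantity into register $j\in R^l$ reaches $q$, we subtract $q$ and forward a carry of $1$ to each descendant in $R^{l+1}$. So the crux is a one-layer statement: if the total amount ``injected'' into layer $l$ (summing all increments and all carries arriving over the course of the sequential update) is the vector $w=\H_l z^{(l-1)}\in\naturals^{R^l}$, then after the update stabilizes the register values are $w \bmod q$ and the total carry forwarded out of register $j$ is $\lfloor w_j/q\rfloor$, i.e. the vector forwarded into layer $l+1$ is $\lfloor(\H_l z^{(l-1)})/q\rfloor = z^{(l)}$.

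First I would set up the induction. For $l=1$ the input to layer $1$ is $x=z^{(0)}$, and each of the $n$ unit increments to $x_i$ triggers, via step 1 of the routine, an increment by $1$ to every register in $\dpl i\subseteq R^1$; summed over all increments this contributes exactly $(\H_1 x)_j$ to register $j$ before any carrying. Then I invoke Lemma 1 (well-definedness / order-independence of $\T_i$ and hence of $\F_G$) to argue that we may compute $\F_G(x)$ by performing \emph{all} input increments first, accumulating the raw totals $\H_1 x$ in the (temporarily invalid) registers of layer $1$, and only then running the carry loop — and furthermore we may run the carry loop layer by layer, since a carry out of layer $l$ only ever lands in layer $l+1$ (this uses the layered structure: edges go $I\to R^1$ and $R^l\to R^{l+1}$). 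For the one-layer carry claim itself, I would track the conserved quantity $\sum_{j} y_j(t) + q\cdot(\text{carries already emitted})$: every iteration of the loop leaves the total ``value mass'' injected into this layer invariant while strictly decreasing $\sum_j \max(y_j(t)-q+1,0)$ or some similar potential, so the loop halts (consistent with Lemma 1), and at termination every $y_j<q$, forcing $y_j = w_j\bmod q$ and total emitted carry from $j$ equal to $\lfloor w_j/q\rfloor$. Feeding $z^{(l)}=\lfloor(\H_l z^{(l-1)})/q\rfloor$ as the injected vector into layer $l+1$ (after pre-multiplying by $\H_{l+1}$) closes the induction and yields $y^{(l)}=\H_l z^{(l-1)}\bmod q$ for all $l$.

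The main obstacle is the bookkeeping that justifies reordering the update operations: a priori, $\F_G$ is defined by a specific interleaving of input increments and carry propagations, whereas the lemma's clean recursion corresponds to ``inject everything into layer $1$, fully resolve layer $1$, then layer $2$, and so on.'' Making this rigorous rests entirely on the order-independence granted by Lemma 1 together with the acyclicity and layered structure of $G$, so the real work is verifying that the alternative schedule is a legal schedule of the same partially-ordered set of elementary operations (increments and single-register carry steps) and therefore produces the same final $y$. Once that is in hand, the arithmetic identity $w = (w\bmod q) + q\lfloor w/q\rfloor$ applied componentwise does the rest.
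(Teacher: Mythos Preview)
The paper does not actually supply a proof of this lemma; like Lemma~1 and the well-definedness of $\F_G$, it is stated and used without argument in this extended abstract. So there is no paper proof to compare against.

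Your approach is the natural one and is essentially correct: recognize the registers-update routine as an abelian ``carry'' (chip-firing) dynamics, use order-independence to reschedule so that all increments are injected first and carries are then resolved layer by layer, and finish with the componentwise identity $w=(w\bmod q)+q\lfloor w/q\rfloor$. The layered structure guarantees that carries out of layer $l$ land only in layer $l+1$, so the layerwise schedule is legitimate and the induction closes.

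One point deserves more care. Lemma~1 as stated asserts order-independence \emph{within} a single call to $\T_i$ (the inner carry loop), and the well-definedness of $\F_G$ asserts independence of the order of the \emph{increments} $i(1),\dots,i(N)$, with each $\T_{i(s)}$ fully stabilized before the next increment. Your reschedule is more aggressive: you defer \emph{all} carries until after \emph{all} increments, and then fire layer by layer. That does follow from the same abelian property underlying Lemma~1 (any maximal firing sequence from a given unstable state reaches the same stable state, and increments commute with firings in the obvious sense), but it is a mild strengthening of what is literally written. Since the paper omits the proof of Lemma~1 anyway, you are not on weaker ground than the paper; just make the needed statement explicit---e.g., ``for any interleaving of unit increments and single-register firings that terminates in a valid state, the terminal state depends only on the multiset of increments''---rather than citing Lemma~1 alone.
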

%
%
\subsection{Recovery function}\label{sec:Recovery}

We now describe the recovery function $\Fh$. Since in this
paper we are only interested in achievability, we will
neglect complexity considerations.

\subsubsection{One layer}
Let us start from a one-layer braid and assume the
inputs to be iid with common distribution
$p_*$ supported on  $\{0,\dots q-1\}\ni x_i$. Then the register values
are $y = \H x$ $\mod q$, where $\H$ is the adjacency matrix of the braid.
Fix $\gamma\in (0,1)$.
We say that the input $x\in\{0,\dots,q-1\}^n$ is \emph{typical},
and write $x\in \Typ_n(p_*)$ if its type
$\theta_x$  satisfies $D(\theta_x|| p_*)\le n^{-\gamma}$
(here the Kullback-Leibler
divergence is computed in natural base).
Denote by $\Typ_n(p_*;y)$ the set of input vectors that are typical and
such that $\H x = y$ $\mod q$. The `typical set decoder'
returns a vector $\xh$ if this is the the unique element in $\Typ_n(p_*;y)$
and a standard error message otherwise. In formulae
\begin{eqnarray}
\Fh(y) =\left\{\begin{array}{ll}
\xh & \mbox{if $\Typ_n(p_*;y)=\{\xh\}$,}\\
\ast & \mbox{if $|\Typ_n(p_*;y)|\neq 1$.}
\end{array}\right.
\end{eqnarray}

\subsubsection{Multi-layer} Consider now a multi-layer braid and
$x\in \naturals^{I}$ (inputs not restricted to be smaller than $q$)
with $x_i$'s distributed independently according to $p$.
It is convenient to write the input vector in base $q$
\begin{eqnarray}
x = \sum_{a\ge 0} x(a)\, q^a\, . \label{eq:XExpansion}
\end{eqnarray}
where $x(a) = \{x_i(a):\, i\in V\}$ with $x_i(a)\in\{0,\dots,q-1\}$.
Notice that, for each $a\ge 0$, the vector $x(a)$ has iid entries.
Let $p_a$ be the distribution on $x_i(a)$ when $x_i$ has distribution
$p$.

We'll apply typical set decoding recursively, determining
the $q$-ary vectors $x(0)$, $x(1)$, $x(2)$, $\dots$ in this order.
Consider first $x(0)$. It is clear from Lemma \ref{lemma:Expansion}
that $y^{(1)}= \H_1 x(0) = \H^1 x(0)$ $\mod q$. We then apply
typical set decoding to the determination of $x(0)$. More precisely,
we look for a solution of $\H^1x = y^{(1)}$ $\mod q$ that is
typical under distribution $p_0$. If there is a unique such solution,
we declare it our estimate of $x(0)$ and denote by $\xh(0)$.
Otherwise we declare an error.

Consider now the determination  of $x(l)$ and assume the lower
order terms in the expansion (\ref{eq:XExpansion}) have already
been estimated to be $\xh(0)$, $\xh(1)$, $\dots$, $\xh(l-1)$. Let
$\zh^{(0)} \equiv \sum_{a=0}^{l-1}\xh(a)\, q^{a}$, and
$\zh^{(a)}$, $a\ge 1$ be determined through the same recursion as
in Eq.~(\ref{eq:ZRecursion}). Further let $\yh^{(a)} =
\H_{a}\zh^{(a-1)}\mod q$ (this are nothing but the register values
on input  $\zh^{(0)}$).

Assume the estimates $\xh(0)$, $\xh(1)$, $\dots$, $\xh(l-1)$
to be correct. It is then easy to realize that $\yh^{(a)}= y^{(a)}$
for $a=1,\dots l$. Further $z^{(l)} = \zh^{(l)} + \H^l x(l)$ $\mod q$, hence
\begin{eqnarray}
y^{(l+1)} = \yh^{(l+1)} + \H^{l+1} x(l)\, \mod q\, .
\end{eqnarray}
We therefore proceed to compute $y^{(l+1)}-\yh^{(l+1)}$ $\mod q$.
If the linear system $\H^{l+1} x(l) = y^{(l+1)}-\yh^{(l+1)}$ $\mod
q$ admits more than one or no solution that is typical with
respect to the distribution $p_l$, an error is returned.
Otherwise, the next term in the expansion (\ref{eq:XExpansion}) is
estimated through the unique typical solution of such linear
system.

The recovery algorithm is summarized below, with one improvement with
respect to the description above. Instead of recomputing
$\zh^{(0)}$, $\dots, \zh^{(l)}$, at stage $l$ we only compute the
vector $\zh^{(l)}$ that is needed at the present stage.
\begin{table}[h]
{\normalsize
\begin{tabular}{ll}
\hline
\multicolumn{2}{l}{RECOVERY ({\sc input}: register values $y$)}\\
\hline
&\vspace{-0.2cm}\\
1: & Initialize $\zh^{(a)}=0$ for $a\ge 0$;\\
2: & {\bf for $l\in\{0,\dots L\}$}\\
3: & \hspace{0.5cm}Set $\yh(l+1)= \H_{l+1}\zh^{(l)}$ $\mod q$;\\
4: & \hspace{0.5cm}Let $\Typ_l$ be the set of $p_l$-typical\\
   & \hspace{0.75cm}solutions of $\H^{l+1}
\xh = y^{(l+1)}-\yh^{(l+1)}$, $\mod q$;\\
5: & \hspace{0.5cm}If $\Typ_l=\{ \xh\}$ let $\xh(l) = \xh$\\
   & \hspace{0.75cm}otherwise if $|\Typ_l|\neq 1$ return error;\\
6: & \hspace{0.5cm}Set $\zh^{(l+1)} =
\lfloor\{\H^{l+1}\zh^{(l)}+\H_{l+1}\xh(l)\}/q\rfloor$;\\
7:& {\bf end}\\
8:& {\bf return} $\xh = \sum_i\xh(i)\, q^i$.\\
\hline
\end{tabular}}
\end{table}
%
%
\subsection{Sparse graph ensemble and choice of the parameters}

The optimal compression rate in Theorem \ref{thm:MainThm} is achieved
with the following random sparse graph construction.
Fix the registers capacity $q$ and an integer $k\ge 2$.
Then for $l=1,\dots, L_0$ the graph induced by vertices
$R_{l-1}\cup R_{l}$ has a random edge set that is sampled by
connecting each $i\in R_{l-1}$ to $k$ iid uniformly random vertices in $R_{l}$
(all edges being directed from $R_{l-1}$ to $R_{l}$).
In other words, the $m_l\times m_{l-1}$,
$0-1$ matrix $\H_{l}$ has independent columns,
each sampled by incrementing $k$ iid positions.

The choice of this ensemble is motivated by implementation
concerns. In the flow counting problem, we do not know a priori
the exact number of flows that needs to be stored. The above
structure, this number can be changed without modifying existing
links. Further, for each new flow, the subset of $k$ registers it
is connected to can be chosen through a simple hash function.

To these $L_0$ stages, we add further $L_1$ stages, all of the same size
$m_{L_0+1}=\cdots = m_{L_0+L_1}=\bm$, with edges connecting
each node in $R_{l-1}$ to a different node in $R_l$.
Equivalently, we take $\H_l$ to be the identity matrix in these stages.

It remains to specify the number of stages $L_0$, $L_1$ and their
sizes $m_1$,\dots, $m_{L_0}$. Let $p_l$ be the distribution of the
$l$-th least significant digit in the $q$-ary expansion of $X_i$.
Recall that we defined $h_l$ to be the $q$-ary entropy of the
distribution $p_l$, i.e.
\begin{eqnarray}
h_l \equiv -\sum_{x=0}^{q-1}p_l(x)\log_q p_l(x)\, .
\end{eqnarray}
Finally, in the achievability proof, we shall assume that $q$ is a
prime number, large enough for $h_l$ to be monotonically
decreasing.
\begin{lemma}\label{lemma:Entropy}
Assume $\prob\{X_1\ge x\}\le A\, x^{-\epsilon}$. Then there exists
constants $B, C$ that only depend on $A,\epsilon$, such that
for all $l\ge 1$, and all $q$ large enough
\begin{eqnarray}
&&h_l\le B\, l\, q^{-l\epsilon}\; ,\label{eq:DigitEntropy1}\\
&&\Big|h_2(p)-\sum_{l\ge 0} h_l\log_2 q\Big|\le C\, q^{-\epsilon}
(\log_2 q)^2\, .\label{eq:DigitEntropy2}
\end{eqnarray}
\end{lemma}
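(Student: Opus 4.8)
The plan is to control the digit distributions $p_l$ directly from the tail bound on $X_1$. First I would observe that the $l$-th $q$-ary digit $X_1(l)$ equals $\lfloor X_1/q^l\rfloor \bmod q$, so it is deterministically zero whenever $X_1 < q^l$. By the assumed power-law tail, $\prob\{X_1(l)\neq 0\}\le \prob\{X_1\ge q^l\}\le A\,q^{-l\epsilon}$. Writing $s_l\equiv\prob\{X_1(l)\neq 0\}$, the distribution $p_l$ puts mass $1-s_l$ on the symbol $0$ and total mass $s_l$ spread over the remaining $q-1$ symbols. The $q$-ary entropy is then bounded, via the grouping/subadditivity inequality $h_q(p_l)\le h_q(s_l) + s_l\log_q(q-1)$, where $h_q(s)=-s\log_q s-(1-s)\log_q(1-s)$ is the binary entropy in base $q$. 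Since $s_l\le A\,q^{-l\epsilon}$ is small for $l\ge 1$ and $q$ large, $h_q(s_l)\le C' s_l\log_q(1/s_l)\le C'' l\,q^{-l\epsilon}$ (using $\log_q(1/s_l)\lesssim l$), while $s_l\log_q(q-1)\le s_l\le A q^{-l\epsilon}$. Combining gives \eqref{eq:DigitEntropy1} with a constant $B$ depending only on $A,\epsilon$, after absorbing the $l=0$ issue by noting the bound is only claimed for $l\ge 1$.

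For \eqref{eq:DigitEntropy2}, the starting point is the chain rule: since the digits $(X_1(0),X_1(1),\dots)$ are a deterministic invertible function of $X_1$, we have $H_2(X_1) = \sum_{l\ge 0} H_2\big(X_1(l)\mid X_1(0),\dots,X_1(l-1)\big) \le \sum_{l\ge 0} H_2(X_1(l)) = \sum_{l\ge 0} h_l\log_2 q$, which gives one direction. For the reverse direction I would use that conditioning reduces entropy only mildly here: the mutual information $I(X_1(l); X_1(0),\dots,X_1(l-1))$ is bounded by the entropy of the event $\{X_1\ge q^l\}$ (on the complement of which $X_1(l)=0$ is independent of everything), so the total gap $\sum_l [h_l\log_2 q - H_2(X_1(l)\mid \cdots)]$ is bounded by $\sum_l O(h_2(s_l) + s_l\log_2(\text{range}))$. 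Summing the geometric-type series $\sum_{l\ge 1} l\,q^{-l\epsilon}(\log_2 q)$ — whose dominant term is the $l=1$ term of order $q^{-\epsilon}\log_2 q$ — and tracking the $\log_2 q$ factors carefully (each digit contributes up to $\log_2 q$ bits, and the small-probability corrections carry an extra $\log_2 q$) yields the claimed bound $C\,q^{-\epsilon}(\log_2 q)^2$.

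The main obstacle I anticipate is getting the powers of $\log_2 q$ exactly right in \eqref{eq:DigitEntropy2}. The quantity $h_2(p)$ is a fixed number (independent of $q$), whereas each $h_l\log_2 q$ individually grows like $\log_2 q$ for small $l$; the near-cancellation that makes the sum converge to $h_2(p)$ must be exploited quantitatively rather than term by term. Concretely, one cannot bound $h_l\log_2 q$ crudely — one must pair it with the conditional entropy $H_2(X_1(l)\mid X_1(0),\dots,X_1(l-1))$ and bound only the difference, which is where the extra factor $\log_2 q$ in the error term comes from (one factor from the entropy-of-range term $s_l\log_2 q$, one from the $\log(1/s_l)\sim l\log q$ term). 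Once the telescoping is set up so that only the discrepancies $h_l\log_2 q - H_2(X_1(l)\mid\cdots)$ appear, bounding each by $\mathrm{const}\cdot q^{-l\epsilon}(\log_2 q)^2$ and summing the geometric series is routine. Monotonicity of $h_l$ for $q$ large, assumed in the admissibility hypothesis, is not needed for this lemma but follows from \eqref{eq:DigitEntropy1} together with a matching lower bound on $h_1$ of order $q^{-\epsilon}$; I would include a remark to that effect.
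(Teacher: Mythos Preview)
Your treatment of \eqref{eq:DigitEntropy1} is essentially the paper's: both use the grouping inequality $H(X_1(l))\le H(Z)+H(X_1(l)\mid Z)$ with $Z=\ind\{X_1(l)\neq 0\}$, bound $\prob\{X_1(l)\neq 0\}\le\prob\{X_1\ge q^l\}\le A q^{-l\epsilon}$, and simplify.

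For \eqref{eq:DigitEntropy2} your plan works, but the paper's argument is much shorter, and the ``main obstacle'' you anticipate is based on a misconception. You write that ``each $h_l\log_2 q$ individually grows like $\log_2 q$ for small $l$'' and that therefore ``one cannot bound $h_l\log_2 q$ crudely.'' This is not so: $h_0\log_2 q = H_2(X_1\bmod q)\le H_2(X_1)=h_2(p)$ is bounded uniformly in $q$, and for $l\ge 1$ part~\eqref{eq:DigitEntropy1} already gives $h_l\log_2 q\le B\,l\,q^{-l\epsilon}\log_2 q\to 0$. The paper exploits exactly this: from subadditivity $h_2(p)\le\sum_{l\ge 0}h_l\log_2 q$ (which you also have) and from $h_2(p)\ge h_0\log_2 q$ (the single-term lower bound), one gets
\[
0\;\le\;\sum_{l\ge 0}h_l\log_2 q-h_2(p)\;\le\;\sum_{l\ge 1}h_l\log_2 q
\;\le\; B\log_2 q\sum_{l\ge 1} l\,q^{-l\epsilon},
\]
and the geometric sum is $O(q^{-\epsilon})$. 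No pairing with conditional entropies and no mutual-information estimates are needed. Your route via $I(X_1(l);X_1(0),\dots,X_1(l-1))$ would eventually arrive at the same place (since that mutual information is trivially at most $H_2(X_1(l))=h_l\log_2 q$), but the detour is unnecessary.
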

The proof of this simple Lemma is deferred to Section \ref{sec:Lemmas}.

\begin{lemma}\label{lemma:TypicalSize}
Let $p_*$ be a distribution over $\{0,\dots, q\}$, with $q$-ary
entropy $H(p_*)$, and $\Typ_n(p_*)$ be the set of $p_*$-typical
vectors defined as in Sec \ref{sec:Recovery}. Let $|\Typ_n(p_*)|$ be the size of
this set. Recall that $x\in \Typ_n(p_*)$ if its type $\theta_x$
satisfies $D(\theta_x|| p_*)\le n^{-\gamma}$. Then, for any
$\beta\in (1-\gamma/2,1)$, there exists $A = A(\beta,\gamma,q)$
such that
\begin{eqnarray*}
|\Typ_n(p_*)|\le q^{n H(p_*)+A\, n^{\beta}}\, ,
\, .\label{eq:TypeCount}
\end{eqnarray*}
Further, if $X =(X_1,\dots,X_n)$ is a vector with iid entries
with common distribution $p_*$
\begin{eqnarray}
\prob\left\{X\not\in\Typ_n(p_*)\right\}\le (n+1)^q\, e^{-n^{1-\gamma}}\, .
\label{eq:Sanov}
\end{eqnarray}
\end{lemma}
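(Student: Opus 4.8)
The plan is to reduce both statements to standard facts from the method of types over the alphabet $\{0,\dots,q-1\}$. Recall that the number of types of denominator $n$ on a $q$-letter alphabet is at most $(n+1)^q$, and that the type class $T(\theta)$ of a type $\theta$ has cardinality $|T(\theta)|\le e^{nH_e(\theta)}$, where $H_e(\theta)$ is the entropy in natural base; equivalently $|T(\theta)|\le q^{nH(\theta)}$ with $H$ the $q$-ary entropy. The typical set $\Typ_n(p_*)$ is the union of type classes $T(\theta)$ over types $\theta$ with $D(\theta\|p_*)\le n^{-\gamma}$, so the first bound will follow once we show that $D(\theta\|p_*)\le n^{-\gamma}$ forces $H(\theta)\le H(p_*)+A'n^{-\gamma'}$ for a suitable $A',\gamma'$, and then absorb the polynomial prefactor $(n+1)^q$ into the exponent.

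Concretely, first I would write $H_e(\theta) = -\sum_x\theta(x)\log\theta(x)$ and compare it to $-\sum_x\theta(x)\log p_*(x) = H_e(p_*)+\big(\text{cross-entropy gap}\big)$. A short computation gives $-\sum_x\theta(x)\log p_*(x) = D(\theta\|p_*) + H_e(\theta)$, so $H_e(\theta) = -\sum_x\theta(x)\log p_*(x) - D(\theta\|p_*)$; to control this I bound $-\sum_x\theta(x)\log p_*(x)$ in terms of $-\sum_x p_*(x)\log p_*(x) = H_e(p_*)$ by using that $\theta$ is close to $p_*$ in $\ell_1$. Pinsker's inequality gives $\|\theta-p_*\|_1\le\sqrt{2D(\theta\|p_*)}\le\sqrt{2}\,n^{-\gamma/2}$, and a Lipschitz-type estimate for $t\mapsto -t\log t$ on each coordinate (handled by splitting into the region where $p_*(x)$ is bounded away from $0$ and the region where it is small, the latter being where $-t\log t$ is still Hölder-continuous) yields $|H_e(\theta)-H_e(p_*)|\le c_q\,n^{-\gamma/2}(\log n)$ or, more crudely, $\le c_q'\,n^{-\gamma/2+\delta}$ for any $\delta>0$. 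Converting to $q$-ary base multiplies by a constant. Hence $|T(\theta)|\le q^{nH(p_*)+c\,n^{1-\gamma/2+\delta}}$ for each admissible type, and summing over the at most $(n+1)^q$ types and choosing $\beta\in(1-\gamma/2,1)$ large enough to swallow both $n^{1-\gamma/2+\delta}$ (small $\delta$) and $q\log_q(n+1)=o(n^{\beta})$ gives $|\Typ_n(p_*)|\le q^{nH(p_*)+A\,n^{\beta}}$, which is \eqref{eq:TypeCount}.

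For \eqref{eq:Sanov}, this is exactly Sanov's theorem in its finite-$n$ form: $\prob\{X\notin\Typ_n(p_*)\} = \prob\{D(\theta_X\|p_*)>n^{-\gamma}\}$, and summing the type-class probability bound $\prob\{\theta_X=\theta\}\le e^{-nD(\theta\|p_*)}$ over all types $\theta$ with $D(\theta\|p_*)>n^{-\gamma}$ gives at most $(n+1)^q e^{-n\cdot n^{-\gamma}} = (n+1)^q e^{-n^{1-\gamma}}$, which is the claimed bound. The main obstacle, such as it is, is purely the continuity estimate for the entropy functional in Step 2: one must be careful near coordinates where $p_*(x)=0$ (where the divergence constraint forces $\theta(x)=0$ as well, so those coordinates contribute nothing) and near small-but-positive $p_*(x)$, where the naive Lipschitz bound on $-t\log t$ degenerates and must be replaced by a modulus-of-continuity argument; this is what forces the exponent $\beta$ to be strictly larger than $1-\gamma/2$ rather than equal to it. Everything else is bookkeeping with the standard type-counting inequalities.
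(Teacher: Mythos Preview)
Your proposal is correct and follows the same route as the paper: type counting (at most $(n+1)^q$ types, each class of size $\le q^{nH(\theta)}$), Pinsker's inequality to convert the divergence constraint into an $\ell_1$ bound, a continuity estimate for entropy, and Sanov's theorem for the second claim. One simplification worth noting: the cross-entropy identity you write down is a detour you do not actually use, and if followed literally would produce a constant depending on $\min_{x:p_*(x)>0}p_*(x)$ rather than on $q$ alone; the paper instead quotes directly the uniform continuity bound $H(\theta)-H(p_*)\le \|\theta-p_*\|_1\log\big(q/\|\theta-p_*\|_1\big)$ (Cover--Thomas), which together with $\|\theta-p_*\|_1\le\sqrt{2D(\theta\|p_*)}\le\sqrt{2}\,n^{-\gamma/2}$ gives $n\big(H(\theta)-H(p_*)\big)=O(n^{1-\gamma/2}\log n)$ and immediately explains why any $\beta>1-\gamma/2$ works, with no case-splitting on the size of $p_*(x)$.
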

In the following we will consider $\gamma$ and $\beta$ fixed once and for all,
for instance by $\gamma=1/2$ and $\beta = 7/8$.

Fix some $\delta>0$, and let $A(q)$ be a suitably large constant,
we let, for $l=1,\dots, L_0$,
\begin{eqnarray}
m_l & \equiv & \max\{ \underline{m}_l, \lceil \delta\, m_{l-1}\rceil\}\, ,\\
\underline{m}_l &\equiv &
(1+\delta)\left\lceil n h_{l-1}(1+\delta)+A(q)n^{\beta}\right\rceil\, .
\end{eqnarray}
The number of stages is such that
\begin{eqnarray}
\underline{m}_l\le n(\log n)^{-2} \mbox{ for any $l\ge L_0$}\, .
\end{eqnarray}
This implies, by Lemma \ref{lemma:Entropy}, $L_0 = O(\log\log n)$.
To this we add $L_1= (\log n)^{3/2}$ stages within the second group,
of size $\bm=m_{L_0}\le n(\log n)^{-2}$.
The total number of registers is therefore
upper bounded as
$|R|\le n(1+\delta)\sum_{l\ge 0} (h_l+An^{\beta-1}) (\sum_{i\ge 0}\delta^i)
+ n(\log n)^{-1/2}$,
and therefore the asymptotic rate of this architecture
\begin{eqnarray}
r\le \frac{1+\delta}{1-\delta}\sum_{l\ge 0} h_l\log_2 q\, .
\end{eqnarray}

Since the right hand side can be made arbitrarily close
to $h(p)$ by Lemma \ref{lemma:Entropy}, Theorem \ref{thm:MainThm}
follows from the following.
\begin{thm}\label{thm:Rephrasing}
For any input distribution $p$ with at most power-law tails and
any choice of $q\ge 2$ and $\delta>0$, there exists $k\ge 2$ such
that the multi-layer braid described above is reliable.
\end{thm}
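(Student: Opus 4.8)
We take $q$ to be prime --- the case relevant to Theorem~\ref{thm:MainThm}, and the one for which the achievability argument is designed --- so that $\Z_q$ is a field. Recall that the recovery function peels off the $q$-ary digits $x(0),x(1),x(2),\dots$ in order, at stage $l$ solving a linear system $\H^{l+1}\xi\equiv s^{(l)}\pmod q$ for a $p_l$-typical $\xi$, where $\H^{l+1}=\H_{l+1}\cdots\H_1$ (and $\H^{l+1}=\H^{L_0}$ as soon as $l+1\ge L_0$, since the last $L_1$ layers are identities). The plan is a union bound over the $L=L_0+L_1$ stages. Since $L_0=O(\log\log n)$ and $L_1=(\log n)^{3/2}$, the number of stages is polylogarithmic, so it suffices to bound the error probability at a single stage by $o((\log n)^{-2})$; the estimates below are in fact of the form $e^{-\Omega(n^{\beta})}$ plus a term vanishing with the graph randomness. (It is also convenient to note, via the power-law tail bound and a union bound over flows, that $x(l)=0$ with probability $\to1$ once $l\gtrsim\epsilon^{-1}\log_q n$, so only $O(\log n)$ stages are genuinely nontrivial, although we treat them all uniformly.)

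\emph{Per-stage reduction.} Fix a stage $l$ and condition on all earlier digits having been decoded correctly, so the true solution of the present system is $x(l)$. An error at this stage forces one of two events: (i) $x(l)\notin\Typ_n(p_l)$, which by Lemma~\ref{lemma:TypicalSize} (Sanov) has probability at most $(n+1)^q e^{-n^{1-\gamma}}$, a bound that trivially survives the union over stages; or (ii) there is another $p_l$-typical solution $\xi\neq x(l)$, equivalently a nonzero $e=\xi-x(l)\in\Z_q^n$ with $\H^{l+1}e\equiv0$ that is a difference of two $p_l$-typical vectors. Conditioning on a typical $x(l)$ and union-bounding over $\xi$,
\begin{equation}
\prob\{\text{error at stage }l\}\le (n+1)^q e^{-n^{1-\gamma}}+\sum_{e}\prob_G\{\H^{l+1}e\equiv0\},
\end{equation}
where $e$ runs over the at most $|\Typ_n(p_l)|\le q^{\,n h_l+A n^{\beta}}$ difference vectors (Lemma~\ref{lemma:TypicalSize}). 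We split this sum by the Hamming weight $w=\|e\|_0$ into a \emph{heavy} part ($w\ge\eta n$) and a \emph{light} part ($w<\eta n$), for a small constant $\eta>0$ fixed later.

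\emph{Heavy vectors.} Here we keep the count $q^{\,n h_l+A n^{\beta}}$ and pair it with a per-vector estimate $\prob_G\{\H^{l+1}e\equiv0\}\le c^{\,m_{l+1}}$, valid for all $e$ with $\|e\|_0\ge\eta n$, where $c=c(q,k,\eta)<1$ tends to $1/q$ as $k\to\infty$. This comes from a Fourier / random-subset-sum computation: for $k$ large the random layers spread a heavy $e$ so that a $(1-o_k(1))$-fraction of the coordinates of $\H^{l+1}e$ each collect a contribution from a growing set of input coordinates, each carrying a unit of $\Z_q$, which pins the probability that such a coordinate takes any fixed value (in particular $0$) near $1/q$; a routine decorrelation over coordinates gives the $c^{\,m_{l+1}}$ bound. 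Multiplying, the heavy contribution is $\le q^{\,n h_l+A n^{\beta}}\,c^{\,m_{l+1}}$. Since in every stage $m_{l+1}\ge(1+\delta)^2 n h_l+(1+\delta)A(q)n^{\beta}$ (for $l\ge L_0$ one also uses $m_{l+1}=\bm=m_{L_0}$ and monotonicity of $h_l$), taking $k$ large enough that $c$ is close enough to $1/q$ (how close depends on $\delta$) and $A(q)$ larger than the constant $A$ of Lemma~\ref{lemma:TypicalSize}, this bound becomes $\le q^{-\Omega(n^{\beta})}$, uniformly in $l$.

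\emph{Light vectors --- the crux.} For $w<\eta n$ the typicality count is useless (there can be $\sim q^{\,w\log_q n}$ vectors of difference-weight $w$), so instead we show that with probability $\to1$ over the choice of the graph \emph{no} nonzero $e$ with $1\le\|e\|_0<\eta n$ has $\H^{l+1}e\equiv0$, for any stage $l$. The tool is expansion of the random sparse layers: for $k$ large a small support $S=\mathrm{supp}(e)$ expands, so many registers of the next layer are joined to exactly one element of $S$, and --- $q$ prime, entries of $e$ units --- such a register takes a nonzero value, while doubly-covered registers are too few to cancel a constant fraction of the singly-covered ones; hence $\|\H_1e\|_0\ge c_1\|e\|_0$ with $c_1>1$, in particular $\H_1e\neq0$. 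Iterating this weight-growth step along the $L_0$ sparse layers (handing $e$ over to the heavy analysis as soon as its running weight crosses $\eta n$) gives $\H^{L_0}e\neq0$, whence $\H^{l+1}e\neq0$. Making this rigorous requires: (a) a first-moment union bound over supports, unit patterns (a factor $\binom{n}{w}(q-1)^{w}$), layers and stages with exponentially small total failure probability --- which is exactly what forces $k$ to be a sufficiently large \emph{constant}, the ``there exists $k\ge2$'' of Theorem~\ref{thm:Rephrasing}; (b) uniform control of the expansion at all $L_0=O(\log\log n)$ layers at once, and of the composed map $\H^{L_0}$ in the regime $l\ge L_0$ (where it expands only very small sets, so the moderate-weight range there needs to be absorbed into the heavy analysis using $h_l\le h_{L_0-1}$); and (c) careful bookkeeping of the mod-$q$ cancellations in the weight-growth step. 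I expect (a)--(c) --- expansion and weight growth through a product of sparse matrices over $\F_q$ --- to be the main technical obstacle. Once it is in place, adding the Sanov term together with the heavy- and light-vector bounds and summing over the $L=O((\log n)^{3/2})$ stages gives $\block(G_n,\Fh_n)\to0$, which is Theorem~\ref{thm:Rephrasing}; combined with Lemma~\ref{lemma:Entropy} and the rate computation preceding Theorem~\ref{thm:Rephrasing}, this yields Theorem~\ref{thm:MainThm}.
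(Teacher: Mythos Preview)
Your overall architecture is right: a per-stage union bound, a Sanov term for atypicality, and a kernel argument for a second typical preimage, split by Hamming weight. The paper does exactly this. The substantive difference is \emph{where} the weight is measured and how the layers are decoupled.

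You split by $\|e\|_0$ at the input and then track the weight of $\H_t\cdots\H_1 e$ through the $L_0$ random layers --- the step you yourself flag as ``the main technical obstacle.'' The paper sidesteps it with a telescoping decomposition: if $\H^{l+1}e=0$ with $e\neq 0$, there is a first index $t\le l+1$ for which $\H^t e=0$ while $z\equiv\H^{t-1}e\neq 0$. Writing $Q_t^{(l)}$ for the probability of this event, one has $\block^{(l,*)}=\sum_{t} Q_t^{(l)}$, and each $Q_t^{(l)}$ involves only the \emph{single} random matrix $\H_t$ acting on an $m_{t-1}$-dimensional vector $z$. The weight split is then on $\|z\|_0$, not on $\|e\|_0$: for $\|z\|_0\le E$ one invokes the one-layer minimum-distance bound of Lemma~\ref{lemma:SmallWeigth}; for $\|z\|_0>E$ one pairs $|\Typ_n(p_l)|\le q^{nh_l+An^\beta}$ with the one-layer estimate $Q(k\vn/m_t)^{m_t}$ of Lemma~\ref{lemma:ProbSol}. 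No product of random sparse matrices is ever analyzed, and your items (a)--(c) simply do not arise.

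There is also a soft gap in your heavy branch as written. To turn a Fourier or subset-sum argument into $\prob\{\H^{l+1}e=0\}\le c^{\,m_{l+1}}$ you ultimately need the vector entering the last random layer, namely $\H^{l}e$, to have weight a constant fraction of $m_l$. Knowing only $\|e\|_0\ge\eta n$ at the input does not give this: cancellations in the intermediate layers can in principle shrink a heavy vector. So your heavy analysis tacitly presupposes the same weight-through-layers control you are building for the light branch (or a direct analysis of the column law of the product $\H^{l+1}$, which is no longer i.i.d.). The paper's $Q_t^{(l)}$ device dissolves this circularity: at the critical layer $t$ the vector $z$ is nonzero by construction, and is then either light (handled by Lemma~\ref{lemma:SmallWeigth}) or heavy (handled by Lemma~\ref{lemma:ProbSol}); nothing about its history is needed.
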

%
%
\section{Analysis of one-layer architectures}

In order to prove our main Theorem or, equivalently, Theorem
\ref{thm:Rephrasing}, we need first to prove a few preliminary
results concerning a one-layer architecture. The proof here follows
the technique of \cite{McEliece}, the main tool being
an estimate of the distance enumerator as in
\cite{GallagerThesis,MB04,LS02}. Distance enumerators for non-binary LDPC
codes have been estimated in \cite{Burshtein1}.
Unhappily we cannot here limit ourselves to citing these works, because
the graph ensemble is different from the regular ones treated there.

Throughout this
Section the source is a vector $X=(X_1,\dots,X_n)$ with iid
entries taking values in $\{0,\dots, q-1\}$ and distribution $p_*$
(in the application to multi-layer schemes $p_*$ will coincide
with $p_l$ for some $l\ge 0$). We let $\H$ be an $m\times n$
matrix whose columns are independent vectors with integer entries.
Each column is obtained by choosing $k$ positions independently
and uniformly at random (eventually with repetition) and
incrementing the corresponding entry by one. In other words, $\H$
is distributed as the adjacency matrix of a given layer in the
multi-layer architecture.

Our first result is a simple combinatorial calculation.
Let $\vl = \{\lambda_z:\, z=1,\dots, q-1\}$ be a vector in
$\reals_+^{q-1}$.
It is convenient to introduce the random variable
$\vW = \{W_z:\, z=1,\dots,q-1 \}$ taking values in $\naturals^{q-1}$.
The joint distribution of $(W_1,\dots, W_{q-1})$ (to be denoted by
$\prob_{\vl}$) is the one
of $q-1$ Poisson random variables with means (respectively)
$\lambda_1,\dots,\lambda_{q-1}$, conditioned on
$\sum_{z=1}^{q-1}z W_z = 0$, $\mod q$.
\begin{lemma}\label{lemma:ProbSol}
Let $x\in\{0,\dots, q-1\}^n$ be an input vector with
$n_z$ entries equal to $z$, for $z=0,\dots, q-1$, and $\H$ be a random
matrix as above. Define $\vn = \{n_z:\, z=1,\dots,q-1\}$.
For any $\vl\in\reals_+^{q-1}$, let
$\vW_1$,\dots $\vW_m$ be $m$ iid vectors with distribution $\Prob_{\vl}$.
Then the probability that $\H x =0$ $\mod q$ is
\begin{align}
\prob\{\H x = 0\} = \prod_{z=1}^{q-1}
\frac{(kn_z)!\, e^{m\lambda_z}}{(m\lambda_z)^{kn_z}}\, Q(\vl)^{m}\,
\prob_{\vl}\left\{\sum_{i=1}^m \vW_i = k\vn\right\}\, ,
\label{eq:ExactProb}
\end{align}
where $Q(\vl)$ is the probability that $\sum_{z=1}^{q-1}z\, U_z = 0$, $\mod q$
for independent Poisson random variables with means
$\lambda_z$.

Further, for some universal constant $C$, and $D_q = 1-\cos 2\pi/q$,
and $n_*=\sum_z n_z$
\begin{eqnarray}
\prob\{\H x = 0\} & \le &
(Ck n_*)^{\frac{q-1}{2}}Q(k\vn/m)^mR(m,\frac{k}{q}\sum_{z=1}^{q-1}zn_z)\, \nonumber\\
&&\label{eq:BoundProb}\\
Q(k\vn/m)&\le &  \frac{1}{q} \left[ 1+(q-1)\, e^{-D_q
\frac{kn_*}{m}}  \label{eq:BoundProb2}
\right]\, ,\\
R(m, N) & = & \min\left\{1, (Cq^2N/m)^{N}\right\}
\label{eq:BoundProb3}
\end{eqnarray}
\end{lemma}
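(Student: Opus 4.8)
\medskip
\noindent\emph{Proof strategy.}
The plan is to first extract the exact identity \eqref{eq:ExactProb} by recasting the construction of $\H$ as a balls‑into‑bins experiment and Poissonizing, and then to estimate the three factors in \eqref{eq:ExactProb} after fixing the free parameter to $\vl=k\vn/m$.

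\smallskip\noindent\emph{Step 1 (the identity).}
First I would view $\H$ as follows: the $n_z$ columns with $x$‑coordinate equal to $z$ deposit, in total, $kn_z$ i.i.d.\ uniform balls of ``colour'' $z$ among the $m$ registers. With $W_z^{(i)}$ the number of colour‑$z$ balls in register $i$, the event $\{\H x\equiv 0\ \bmod q\}$ holds exactly when the single‑register events $A_i\equiv\{\sum_{z=1}^{q-1}zW_z^{(i)}\equiv 0\ \bmod q\}$ hold for all $i$. The array $(W_z^{(i)})$ has the product‑multinomial law, which by the standard Poissonization identity equals, for \emph{any} $\vl\in\reals_+^{q-1}$, the law of independent $\mathrm{Poisson}(\lambda_z)$ variables conditioned on $\sum_iW_z^{(i)}=kn_z$ for every $z$. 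Writing $\prob\{\H x=0\}$ as this conditional probability and carrying out the conditioning in two stages — first intersecting the independent events $A_i$ (cost $Q(\vl)^m$, after which the registers are i.i.d.\ $\prob_\vl$), then imposing the colour totals (cost $\prob_\vl\{\sum_i\vW_i=k\vn\}$) — and dividing by the Poisson normalizations $\prob\{\mathrm{Poisson}(m\lambda_z)=kn_z\}=e^{-m\lambda_z}(m\lambda_z)^{kn_z}/(kn_z)!$ produces \eqref{eq:ExactProb}. Since the left side does not depend on $\vl$, I would then choose $\vl=k\vn/m$, matching each $\lambda_z$ to the unconditioned mean $kn_z/m$.

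\smallskip\noindent\emph{Step 2 (prefactor and $Q$).}
With this choice the Gamma‑ratio prefactor in \eqref{eq:ExactProb} becomes $\prod_{z=1}^{q-1}(kn_z)!\,e^{kn_z}(kn_z)^{-kn_z}$; Stirling bounds each factor by $C'\sqrt{1+kn_*}$ and at most $q-1$ of them differ from $1$, so the product is $\le(Ckn_*)^{(q-1)/2}$ with $C$ universal (the $q$‑dependent $(C')^{q-1}$ is absorbed by writing it as $((C')^2)^{(q-1)/2}$). For $Q(\vl)$, I would expand $\ind\{\sum_zzU_z\equiv 0\}$ over $q$‑th roots of unity to get $Q(\vl)=\tfrac1q\sum_{a=0}^{q-1}\exp\!\big(\sum_z\lambda_z(\cos\tfrac{2\pi az}{q}-1)\big)$ (imaginary parts cancelling under $a\leftrightarrow q-a$); using that $q$ is prime, so $z\mapsto az\bmod q$ permutes $\{1,\dots,q-1\}$ for each $a\ne 0$, one gets $\sum_z\lambda_z(1-\cos\tfrac{2\pi az}{q})\ge D_q\sum_z\lambda_z=D_qkn_*/m$, which is \eqref{eq:BoundProb2}.

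\smallskip\noindent\emph{Step 3 (the factor $R(m,N)$ — the hard part).}
It remains to prove $\prob_\vl\{\sum_i\vW_i=k\vn\}\le R(m,N)$ with $N=\tfrac kq\sum_zzn_z$. The bound $\le 1$ being trivial, I may assume $Cq^2N/m<1$, which (via $n_*\le qN/k$) forces $u\equiv kn_*/m$ to be small. Under $\prob_\vl$ each register has $S_i\equiv\sum_zzW_z^{(i)}\in q\naturals$; with $\sigma_i=S_i/q$, the event $\{\sum_i\vW_i=k\vn\}$ forces $\sum_i\sigma_i=\tfrac1q\sum_zz\,kn_z=N$, hence in particular $\sum_i\sigma_i\ge N$. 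The key point I would exploit is that $\sigma_i$ is \emph{quadratically}, not merely linearly, small on average: an overflow $S_i\ge q$ requires at least two Poisson increments (each increment a value in $\{1,\dots,q-1\}$) out of a total increment intensity $u$, whence $\prob_\vl\{\sigma_i=s\}\le Cu^{s+1}$ and $\E_\vl[\sigma_i]\le Cu^2$. Therefore $\sum_i\sigma_i$ has mean $\le Cmu^2=C(kn_*)^2/m\ll N$, so $\{\sum_i\sigma_i\ge N\}$ is a genuine large deviation; a Chernoff bound with $\E_\vl[e^{t\sigma_i}]\le 1+Cu^2e^t$ (valid for $e^tu\le\tfrac12$), optimized over $t$, yields $\prob_\vl\{\sum_i\sigma_i\ge N\}\le(C'mu^2/N)^N$. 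Finally, rewriting $mu^2/N=(kn_*)^2/(mN)$ and using $kn_*\le qN$ (each $z\ge1$) and $N<kn_*$ (each $z\le q-1$) bounds this by $(Cq^2N/m)^N$, so $\prob_\vl\{\sum_i\vW_i=k\vn\}\le R(m,N)$. Plugging Steps 2–3 into \eqref{eq:ExactProb} gives \eqref{eq:BoundProb}. I expect the bulk of the effort to be exactly here: recognizing that overflows carry a quadratically small rate — which is what keeps the $m^{-N}$ in $R(m,N)$ from being destroyed by the combinatorial choice of which registers overflow — together with the bookkeeping needed to keep every $q$‑dependent constant compatible with a single universal $C$.
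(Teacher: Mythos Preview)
Your argument is correct and tracks the paper's proof closely: the balls-into-bins/Poissonization derivation of \eqref{eq:ExactProb}, the specialization $\vl=k\vn/m$ with Stirling for the prefactor, the discrete-Fourier bound on $Q$, and the reduction of the $R$-factor to a tail bound on $\sum_i S_i/q\ge N$ exploiting that an overflow requires at least two Poisson arrivals (hence rate $O(u^2)$) are exactly the paper's steps. Two minor remarks: your displayed identity for $Q(\vl)$ with purely real exponents should be an inequality (pairing $a\leftrightarrow q-a$ makes the \emph{sum} real but not each exponent, so one bounds $|\exp(\cdot)|=\exp(\mathrm{Re}(\cdot))$ as the paper does), and in Step~3 you finish with a Chernoff bound whereas the paper uses the cruder estimate $\binom{m}{N}\prob\{V_i\ge 1\}^N$ --- both routes land on $(Cq^2N/m)^N$.
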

\begin{proof}
Due to the symmetry of the distribution of $\H$ with respect to
permutation in its columns, $\prob\{\H x = 0\}$ does depend on $x$
only through the number of ones, twos, etc. Without loss of
generality we can assume the first $n_1$ coordinates to be ones,
the next $n_2$ to be twos, and so on, and neglect the last
$n-\sum_z n_z$ columns, corresponding to zeros. Think now of
filling the matrix, by choosing its non-zero entries (edges in the
associated graph). If we associate to each such entry the value of
the corresponding coordinate in $x$, we want the probability for
the sum of labels on each row to be $0$ $\mod q$. Since entries
are independent and uniformly random, this is equal to the
probability that each of $m$ urns is filled with balls whose
labels add to $0$, when we throw $kn_1$ balls labeled with $1$,
$kn_2$ labeled with $2$, and so on. It is an exercise in
combinatorics to show that this is
\begin{eqnarray*}
\prod_{z=1}^{q-1}\frac{(kn_z)!}{m^{kn_z}}\, {\sf coeff}\left\{
P(\xi_1,\dots,\xi_{q-1})^m,\xi_1^{kn_1}\cdots\xi_{q-1}^{kn_{q-1}}\right\}\, ,\\
P(\cdots)\equiv\sum_{l_1\dots l_{q-1}}\frac{\xi_1^{l_1}}{l_1!}
\dots \frac{\xi_{q-1}^{l_{q-1}}}{l_{q-1}!}\,\ind\left\{\sum_{z=1}^{q-1}
z\, l_z = 0\right\}\, .
\end{eqnarray*}
Equation (\ref{eq:ExactProb}) is then obtained by evaluating
$\Prob_{\vl}$ and showing that it yields the above combinatorial
expression.

In order to get Eq.~(\ref{eq:BoundProb}), we denote
$\prob_{\vl}\{\cdots\}$ by $R$, and use $\lambda_z=k n_z/m$, thus
leading to
\begin{eqnarray*}
\prob\{\H x=0\} = \prod_{z=1}^{q-1}
\frac{(kn_z)!\, e^{kn_z}}{(kn_z)^{kn_z}}\, Q(k\vn/m)^{m} R\, .
\end{eqnarray*}
Equation (\ref{eq:BoundProb}) follows from the observation that
$N!\le \sqrt{CN}\, (N/e)^N$ for some universal constant $C$.

In order to prove Eq.~(\ref{eq:BoundProb2}), notice that,
by discrete Fourier transform
\begin{eqnarray*}
Q(\vl) & = & \frac{1}{q}\sum_{\ell = 0}^{q-1}
\E\left\{e^{\frac{2\pi i\ell}{q}\sum_{z=1}^{q-1}zU_z }\right\} \\
& = & \frac{1}{q}\sum_{\ell = 0}^{q-1}
\exp\left\{-\sum_{z=1}^{q-1}\lambda_z(1-e^{\frac{2\pi i\ell z}{q}})\right\}
\, .
\end{eqnarray*}
The claim is proved by singling out the $\ell=0$ term and bounding
the others using ${\rm Re}(1-e^{\frac{2\pi i\ell z}{q}})\ge D_q$.

Let us finally prove Eq.~(\ref{eq:BoundProb3}). Obviously $R\le 1$
since it is an upper bound on the probability
$\prob_{\vl}\{\cdots\}$. If we let $N\equiv
\frac{k}{q}\sum_{z=1}^{q-1}zn_z$, we can therefore assume, without
loss of generality, that $N$ is an integer with $N/m\le 1/q$. Let
$V_i$ be distributed as $\sum_{z=1}^{q-1}W_{i,z}z/q$ conditioned
on $V_i$ being an integer. Then the probability
$\prob_{\vl}\{\cdots\}$ is upper bounded by
\begin{eqnarray*}
\prob\left\{\sum_{i=1}^m V_i \ge N\right\}&\le
& \binom{m}{N}\prob\{V_i\ge 1\}^N\\
&\le &\left(\frac{Cm}{N}\, \prob\{V_i\ge 1\}\right)^N\, .
\end{eqnarray*}
Recalling the definition of $V_i$, we have
\begin{eqnarray*}
\prob\{V_i\ge 1\} & = & \prob\Big\{\sum_{z=1}^{q-1}z U_z\ge q\Big|
\sum_{z=1}^{q-1}z U_z = 0\mod q \Big\}\\
&\le &e^{\sum_{z=1}^{q-1}\lambda_z}
\prob\Big\{\sum_{z=1}^{q-1}z U_z\ge q\Big\}\, .
\end{eqnarray*}
But $\sum_{z=1}^{q-1}\lambda_z = kn_*/m\le Nq/m\le 1$. Further,
$\sum_{z=1}^{q-1}z U_z\ge q$ only if $\sum_{z=1}^{q-1} U_z\ge 2$.
Therefore we get
\begin{eqnarray*}
\prob\{V_i\ge 1\}&\le &e \prob\Big\{\sum_{z=1}^{q-1} U_z\ge 2\Big\}\\
&\le & C\Big(\sum_{z=1}^{q-1}\lambda_z\Big)^2\le C(kn_*/m)^2\, .
\end{eqnarray*}
The proof is completed by noticing that $(kn_*/m)\le (Nq/m)$.
\end{proof}

In the following, given a vector $x = (x_1,\dots,x_n)$, we shall denote by
$||x||_0$ is number of non-zero entries.
\begin{lemma}\label{lemma:SmallWeigth}
Let $\H$ be a random $m\times n$ matrix  distributed as above,
with column weight $k$. Assume $k$ not to be a multiple of $q$,
$m\le n$, $(m/nk)^{1/k}\ge \Delta>0$ and $3\le k\le m/\log m$.
Then, there exists a constant $B=B(q,\Delta)$, $C = C(q,\Delta)$,
such that, if
\begin{eqnarray}
E\le \frac{C\, m}{\log(nk/m)}\, ,\label{eq:LHypothesis}
\end{eqnarray}
then
\begin{eqnarray}
\prob \Big\{\exists ||z||_0\le E\, :\,\H z = 0 \mod q\Big\}\le
 n^{2}\left(\frac{Bk}{m}\right)^{\frac{k}{q}}\label{eq:BoundSmallDist}
\end{eqnarray}
(where it is understood that $z\in\{0,\dots,q-1\}^{n}$.)
\end{lemma}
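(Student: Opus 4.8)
The plan is to bound the probability in \eqref{eq:BoundSmallDist} by a union bound over all nonzero vectors $z\in\{0,\dots,q-1\}^n$ with $1\le\|z\|_0\le E$, splitting the range of weights into ``very small'' and ``moderate'' weights. For a vector $z$ of a given support pattern and digit profile $\vn=(n_1,\dots,n_{q-1})$ with $n_*=\|z\|_0=\sum_z n_z$, Lemma \ref{lemma:ProbSol} gives $\prob\{\H z=0\bmod q\}\le (Ckn_*)^{(q-1)/2}\,Q(k\vn/m)^m\,R(m,\tfrac kq\sum_z z n_z)$. There are at most $\binom{n}{n_*}(q-1)^{n_*}\le n^{n_*}$ vectors of weight $n_*$, so the total contribution of weight-$n_*$ vectors is at most $n^{n_*}(Ckn_*)^{(q-1)/2}Q(k\vn/m)^m R(m,\cdot)$. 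The goal is to show this is summable over $n_*$ and dominated by the $n_*$ of order $1$ terms, which produce the $(Bk/m)^{k/q}$ factor, with the extra $n^2$ absorbing the combinatorial prefactor for the smallest weights.

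First I would treat the smallest weights, $1\le n_*\le n_0$ for a constant $n_0$. Here $R(m,N)=\min\{1,(Cq^2N/m)^N\}$ with $N=\tfrac kq\sum_z z n_z\ge \tfrac kq$ (using that at least one $n_z\ge1$, and $k$ is not a multiple of $q$ so $N$ is genuinely of order $k/q$ and in particular $N\ge 1$ once $k\ge q$; the case of small $k$ versus small $q$ needs the non-divisibility hypothesis to guarantee $N\ge 1$). Since $N\ge k/q$ and $N/m$ is tiny, $R(m,N)\le (Cq^2N/m)^N\le (C'k/m)^{k/q}$ after bounding $N\le kn_*/q\le kn_0/q$ by a constant multiple of $k$; combined with $Q\le 1$ and the polynomially-bounded prefactor $n^{n_*}(Ckn_*)^{(q-1)/2}\le n^{O(1)}$, this is at most $n^2(Bk/m)^{k/q}$ for a suitable $B=B(q,\Delta)$ and $n$ large, using $(m/nk)^{1/k}\ge\Delta$ to control how $k/m$ compares to powers of $n$. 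This is the term that sets the form of the right-hand side.

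The bulk of the work — and the main obstacle — is the ``moderate weight'' range $n_0\le n_*\le E$. Here I cannot afford $Q\le 1$; instead I use \eqref{eq:BoundProb2}, $Q(k\vn/m)\le \tfrac1q[1+(q-1)e^{-D_q kn_*/m}]$, so $Q(k\vn/m)^m\le q^{-m}[1+(q-1)e^{-D_q kn_*/m}]^m\le q^{-m}\exp\{(q-1)m\,e^{-D_q kn_*/m}\}$. The weight-$n_*$ contribution is then at most $\exp\{n_*\log n + O(\log n_*) + (q-1)m e^{-D_q kn_*/m}\}\,q^{-m}$. The point is that $q^{-m}$ is astronomically small while $n_*\log n\le E\log n\le C m\log n/\log(nk/m)$; one checks that for $E$ below the threshold \eqref{eq:LHypothesis} with $C=C(q,\Delta)$ chosen small enough, the exponent $n_*\log n + (q-1)m e^{-D_q kn_*/m} - m\log q$ is negative, and in fact bounded above by (say) $-n_*$, so summing the geometric-type series over $n_0\le n_*\le E$ gives something much smaller than the smallest-weight term. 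The delicate points are: (i) when $n_*$ is small relative to $m/k$ the factor $e^{-D_q kn_*/m}$ is close to $1$ so $(q-1)me^{-D_q kn_*/m}$ can be as large as $\sim (q-1)m$, which would swamp $m\log q$ unless $q$ is large enough or we instead use the $R$ factor there — so I would set the crossover $n_0$ so that for $n_*\le n_0$ the $R$-based bound of the previous paragraph applies and for $n_*\ge n_0$ the exponent $D_q kn_*/m$ is large enough that $(q-1)me^{-D_q kn_*/m}<\tfrac12 m\log q$, which is possible precisely because $k\ge 3$ and $k\le m/\log m$ keep $kn_*/m$ in a usable range; (ii) tracking all constants so that the final constants $B,C$ depend only on $q$ and $\Delta$, which follows since every estimate above depends only on $q$, $D_q$, the universal constant $C$ from Stirling, and the ratio bound $\Delta$. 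Assembling the two ranges and using $n^{n_*}\ge n^2$ available from the prefactor completes the union bound and yields \eqref{eq:BoundSmallDist}.
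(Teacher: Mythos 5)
Your overall strategy --- union bound over weights, Lemma \ref{lemma:ProbSol} for the per-vector probability, the $R$-factor for small weights and the $Q$-factor for large weights --- is the same as the paper's, but the execution has a genuine gap at the crossover between the two regimes, and that crossover is precisely where the content of the lemma lies. The crossover cannot be a constant $n_0$: your own condition for entering the $Q$-regime, $(q-1)m e^{-D_q k n_*/m}<\tfrac12 m\log q$, is equivalent to $n_*\ge \frac{m}{D_q k}\log\frac{2(q-1)}{\log q}$, i.e.\ $n_*=\Omega(m/k)$ (the paper's threshold $E_0=\rho(q)\frac mk(\frac m{nk})^{2/(k-2)}$ is exactly of this order). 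In the intermediate range, constant $\ll n_*\ll m/k$, the bound $Q\le\frac1q[1+(q-1)e^{-D_qkn_*/m}]$ yields only $Q^m\le e^{-ckn_*}$, nothing like $q^{-m}$, which is hopeless against a counting factor $n^{n_*}=e^{n_*\log n}$ when $k$ is a constant. So the $R$-based argument must be carried all the way up to $\Theta(m/k)$, and there your claim that the prefactor $n^{n_*}(Ckn_*)^{(q-1)/2}$ is ``polynomially bounded'' fails: $n^{n_*}$ is superpolynomial. What is actually needed --- and what the paper proves --- is that the per-weight term $\binom{n}{\ell}(q-1)^\ell\,(Ak\ell/m)^{k\ell/q}$ decays geometrically in $\ell$, i.e.\ that $\xi(\ell)=\frac{n}{\ell}(Ak\ell/m)^{k/q}\le 1/2$ throughout $\ell\le E_0$; this is where the hypothesis $(m/nk)^{1/k}\ge\Delta$ and the precise choice of $E_0$ enter, and it is the step that isolates the dominant $\ell=2$ term $n^2(Bk/m)^{k/q}$. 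None of this is in your sketch.

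Two further estimates in your $Q$-regime are too lossy to close the argument even for $n_*\gtrsim m/k$. The count $\binom{n}{n_*}(q-1)^{n_*}\le n^{n_*}$ discards the $1/n_*!$, which is essential near the top of the range: for $n_*$ close to $E\sim m/\log(nk/m)$ one needs $(An/n_*)^{n_*}=e^{n_*\log(An/n_*)}\le e^{C'm}$, whereas $n_*\log n$ can exceed $m\log q$ by an unbounded factor (take $m=n/\log n$, so $\log n/\log(nk/m)\to\infty$). Moreover the bound $Q^m\le q^{-m}\exp\{(q-1)me^{-D_qkn_*/m}\}$ is vacuous (it exceeds $1$, since $q^{-1}e^{q-1}>1$) unless $e^{-D_qkn_*/m}<\log q/(q-1)$; the usable statement is simply that $Q\le e^{-C(q,\Delta)}$ for some constant $C>0$ once $kn_*/m$ is bounded below, after which $e^{-Cm}$ beats $(An/E)^E$ provided the constant in \eqref{eq:LHypothesis} is small enough. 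With these repairs your outline becomes the paper's proof; as written, both the middle of the weight range and its top end are uncovered.
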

\begin{proof}
Throughout the proof, $A$ will denote a generic
constant depending only on $q$ that can be chosen large enough to make the
inequalities below hold.

Let $z\in\{0,\dots,q-1\}^n$ be such that $||z||_{0}=\ell$.
We will upper bound the probability that $\H z = 0$ $\mod q$
in different ways depending whether $\ell \le E_0$ or $\ell>E_0$,
where
\begin{eqnarray}
E_0 = \rho(q)\, \frac{m}{k}\,\left(\frac{m}{nk}\right)^{2/(k-2)}\, ,
\label{eq:L0Definition}
\end{eqnarray}
with $\rho(q)$ a function to be determined. Notice that, under our
hypotheses,
\begin{eqnarray}
\frac{k E_0}{m}\ge \rho(q)\, \Delta^{2k/(k-2)}\, \label{eq:L0Bound}
\end{eqnarray}
is bounded away from $0$ (as $2< 2k/(k-2)\le 6$
for $k\ge 3$.)

For $||z||_0=\ell\le E_0$ (and $z\neq 0$) we use Lemma
\ref{lemma:ProbSol}, Eq.~(\ref{eq:BoundProb}), where we set $Q(\,
\cdots\,)\le 1$, $n_*=\ell$ and $\frac{k\ell}{q}\le
\frac{k}{q}\sum_{z=1}^{q-1}zn_z\le k\ell$. Further we assumed
$Ak\ell/m\le 1$, which holds without loss of generality if we take
$\rho(q)\le 1/A\Delta^{6}\le 1/A\Delta^{2k/(k-2)}$ in
Eq.~(\ref{eq:L0Definition}), thus getting
\begin{eqnarray}
\prob\{\H z = 0\} & \le &
(Ak \ell)^{\frac{q-1}{2}} (Ak\ell/m)^{k\ell/q}\, .
\end{eqnarray}

Since $(k\ell)^{(q-1)/2}\le A^{k\ell/q}$, we have (by properly
adjusting $A$)
\begin{eqnarray}
\prob\{\H z = 0\} & \le & (Ak\ell/m)^{k\ell/q}\, .
\end{eqnarray}

For $||z||>E_0$, we use Eq.~(\ref{eq:BoundProb}) with $R(\cdots)\le 1$.
Since $k\ell/m > k E_0/m$ is bounded away from $0$ by Eq.~(\ref{eq:L0Bound}),
we have $Q(\cdots)\le e^{-C}$ for some $C=C(\Delta,q)>0$ and therefore
\begin{eqnarray}
\prob\{\H z = 0\} & \le & (Ak\ell)^{(q-1)/2}\, e^{-Cm}\, .
\end{eqnarray}

There are at most $\binom{n}{\ell}\, (q-1)^{\ell}\le
\left(\frac{A n}{\ell}\right)^\ell$  vectors $z$ with $||z||_0 = \ell$.
If we denote by $\prob_{E_1,E_2}$ the probability of the event
$\Big\{\exists z: \, E_1\le ||z||_0\le E_2\, ,
\,\H z = 0 \mod q\Big\}$, the probability in Eq.~(\ref{eq:BoundSmallDist})
is upper bounded by $\prob_{2,E_0}+\prob_{E_0,E}$
(notice that if $k$ is not a multiple of $q$,
$\H z = 0$ is impossible for $||z||_0=1$).
By union bound we have
\begin{eqnarray*}
\prob_{2,E_0}&\le & \sum_{\ell=2}^{E_0} \left(\frac{A n}{\ell}\right)^\ell
 \left(\frac{Ak\ell}{m}\right)^{k\ell/q}
\\
&\le& \left(\frac{A n}{2}\right)^2
 \left(\frac{2Ak}{m}\right)^{2k/q}
\sum_{\ell=2}^{E_0} \xi(\ell)^{\ell-2} \, ,
\end{eqnarray*}
where (using $(\ell/2)^{2k/q-2}\le A^{k(\ell-2)/q}$ and
eventually adjusting the constant $A$)
\begin{eqnarray*}
\xi(\ell) \equiv \frac{n}{\ell}
 \left(\frac{Ak\ell}{m}\right)^{k/q}\, .
\end{eqnarray*}
For $\ell\le E_0$, and choosing $\rho(q)$
small enough in Eq.~(\ref{eq:L0Definition}), we obtain $\xi(\ell)\le 1/2$
thus leading to  $\prob_{2,E_0}\le n^2 (Ak/m)^{k/q}$.

Finally consider the contribution of vectors $||z||_0\ge E_0$.
Proceeding as above, we have
\begin{eqnarray*}
\prob_{E_0,E}&\le &\sum_{\ell=E_0}^{E} \left(\frac{A n}{\ell}\right)^\ell
(Ak\ell)^{(q-1)/2}\, e^{-Cm}\\
&\le & E \left(\frac{A n}{E}\right)^E (AkE)^{(q-1)/2}\, e^{-Cm}\, .
\end{eqnarray*}
Here we bounded $(An/\ell)^{\ell}=[ (An/\ell)^{\ell/An}]^{An}\le
(An/E)^{E}$, using the fact that $x^{-x}$ is an increasing
function of $x$ for $x\le e^{-1}$, and that $E/An =
Cm/An\log(nk/m)\le Cm/An$ is smaller than $e^{-1}$ for $C$ small
enough.

Finally we bound $E^{(q+1)/2}\le A^E$ and $k^{(q-1)/2}\le k^{E}$
(which holds for $m$ large enough), thus getting
\begin{eqnarray*}
\prob_{E_0,E}&\le & \left(\frac{nkA}{m}\right)^E\, e^{-Cm}\, .
\end{eqnarray*}
If we take $E= Cm/2\log(nkA/m)$, we get $\prob_{E_0,E_1}\le
e^{-Cm/2}$, which is smaller than $(Bk/m)^{\frac{k}{q}}$ for a
properly chosen constant $B$ and $k\le m/\log m$ (indeed $k\le
m\ve_m$ would be enough for any $\ve_m\downarrow 0$.)
\end{proof}
%

%
%
\section{Analysis of multi-layer architectures and
proof of Theorem \ref{thm:MainThm}}

\begin{proof}
Let $\block^{(l)}$ denote the probability that $l$-th term in the
$q$-ary expansion of $x$ is decoded incorrectly by the decoder in
Section \ref{sec:Recovery} (i.e. that $\xh(l)\neq x(l)$) given
that $x(0)$, \dots, $x(l-1)$ have been correctly recovered. We
will prove that $\block^{(l)} = O(n^{-A})$ for some $A>0$. Since
the multi-layer architecture involves at most  $C(\log
n)^{\frac{3}{2}}$ layers, this implies the thesis. Further, we
shall consider only the first $L_0$ layers, since it will be clear
from the derivation below that the error probability is decreasing
for the last $L_1$ layers.

Let $x$ be the input. Since we are focusing on the $l$-th term in
the $q$-ary expansion  of the input, we will drop the index $l$,
and take $x\in \{0,\dots,q-1\}^n$. This is just a vector whose
entries are iid with distribution $p_l$.

The error probability $\block^{(l)}$ is upper bounded by the
probability that $x\not\in \Typ_n(p_l)$ plus the probability that
there exists $x'\neq x$ with $\H^lx' = \H^lx$ $\mod q$. The first
contribution is bounded by Lemma \ref{lemma:TypicalSize}, and we
can therefore neglect it. Denoting the second contribution as
$\block^{(l,*)}$, and writing $\E_x$, $\prob$ for (respectively)
expectation with respect to $x$ and probability with respect to
the matrices $\H_1$, \dots $\H_l$, we have (matrix multiplications
below are understood to be modulo $q$)
\begin{eqnarray*}
\block^{(l,*)} &=&  \E_x\prob\left\{\exists x'\in \Typ_n(p_l)\setminus \{x\}
\mbox{ s.t. } \H^{l}x' = \H^lx\right\}\\
& = & \sum_{t=1}^{l}Q^{(l)}_t\, ,\\
Q^{(l)}_t &\equiv&  \E_x\prob\big\{\exists x'\in \Typ_n(p_l)\setminus \{x\}
\mbox{ s.t. } \\
&&\phantom{\E_x\prob\big\{\exists  x'\in }
\H^{t}x' = \H^tx, \, \H^{t-1}x' \neq \H^{t-1}x\big\}\, .
\end{eqnarray*}
Since, $l\le L=O(\log n)$, it is sufficient to show $Q^{(l)}_t = O(n^{-A})$.
In $Q^{(l)}_t$ we can separate error events due to input $x'$
such that $d_t\equiv
d(\H^{t-1}x',\H^{t-1}x)\le E$ and the other ones.
As a consequence $Q^{(l)}_t$ is upper bounded by
\begin{align*}
\E_x &\prob\left\{\exists x'\in \Typ_n(p_l),
\mbox{ s.t. }
1\le d_t\le E, \; \H^{t}x' = \H^tx\right\}\\
+&\E_x \prob\left\{\exists x'\in \Typ_n(p_l)
\mbox{ s.t. } E< d_t,\, \H^{l}x' = \H^lx\right\}\\
&\le\prob\{\exists z\mbox{ s.t. } ||z||_0\le E\, , \H_tz = 0\}\\
&\phantom{aaaa}+|\T_n(p_l)| \sup\big\{\prob\{\H_t z = 0\}\, :\, ||z||_0>E\big\} .
\end{align*}
Here $z$ is understood to be a $m_{t-1}$ dimensional vector with
entries in $\{0,\dots,q-1\}$.

Notice that $(m_t/km_{t-1})^{1/k}\ge (\delta/k)^{1/k}\ge \delta$.
Next we choose $E =C(q,\Delta=\delta) m_t/\log(m_{t-1}k/m_{t})$ with $C(q,\Delta)$
as in the statement of Lemma \ref{lemma:SmallWeigth}.
As a consequence the first term above is upper bounded by
\begin{eqnarray*}
m_{t-1}^{2}\left(\frac{Bk}{m_t}\right)^{\frac{k}{q}}\le
(Bk)^{\frac{k}{q}} \delta^{-2} m_t^{-{\frac{k}{q}}+2} \le C (\log
n)^{{\frac{k}{q}}-2} n^{-{\frac{k}{q}}+2}\, ,
\end{eqnarray*}
where we used $m_{t-1}\le m_t/\delta$ and $m_t\ge n/(\log n)^2$.
The constant $C$ that depends uniquely on $q$, $k$, $\delta$, but not on $n$.

It remains to bound the second contribution, due to
inputs $x'$ with $d(x',x)> E$. Using
Lemma \ref{lemma:TypicalSize} (to bound $\T_n(p_l)$) and
\ref{lemma:ProbSol} (to bound $\prob\{\H_t z = 0\}$ for $||z||_0>E$)
\begin{eqnarray*}
\E_x \prob\left\{\exists x'\in \Typ_n(p_l)
\mbox{ s.t. } E< d_t,\, \H^{l}x' = \H^lx\right\} \\
\le q^{nh_l+An^{\beta}} (Ckn)^{\frac{q-1}{2}}\left\{\frac{1}{q}
[1+(q-1)e^{-DkE/m_t}]\right\}^{m_t}\, ,
\end{eqnarray*}
By eventually enlarging the constant $A$ (in a way that depends
on $q$), we can
get rid of the term $(Cn)^{\frac{q-1}{2}}$. By
further using $(1+x)\le q^{x/\log q}$ we can upper bound the above by
$k^{q-1/2}q^{\Phi}$ where
\begin{eqnarray*}
\Phi = nh_l+A(q)n^{\beta} -m_t+ A'(q)m_te^{-D(q)kE/m_t}
\end{eqnarray*}
with $A'(q) = (q-1)/\log q$. Notice that $kE/m_t =
C(q,\delta)k/\log(km_{t-1}/m_t)$ can be made arbitrarily large by
taking $k$ large enough. In particular, we can choose
$k_*(q,\delta)$ such that $A'(q)e^{-D(q)kE/m_t}\le \delta/3$ for
any $k\ge k_*$. For such $k$, and using the fact that $m_t\ge m_l
= [nh_l+A(q)n^{\beta}](1+\delta)$
\begin{eqnarray*}
\Phi \le nh_l+A(q)n^{\beta} -m_l(1-\delta/3)\le
 -\frac{1}{3}\delta[nh_l+A(q)n^{\beta}]\, .
\end{eqnarray*}

Summing the various contributions, we obtain, for any $k\ge k_*(q,\delta)$
\begin{eqnarray}
Q^{(l)}_t \le C(q,k,\delta) (\log
n)^{{\frac{k}{q}}-2}n^{-{\frac{k}{q}}+2}+ k^{\frac{q-1}{2}}
q^{-\delta (A(q) n^{\beta}+nh_l)/3}\, ,
\end{eqnarray}
which proves the thesis.
\end{proof}
%
%

\section{Some auxiliary results}
\label{sec:Lemmas}

\begin{proof}[Proof: Lemma \ref{lemma:Entropy}]
First consider Eq.~(\ref{eq:DigitEntropy1}). Let $X_1$ be an
integer random variable with distribution $p$, $X_1(l)$ its $l$-th
least significant $q$-ary digit and $Z$ the indicator function on
$X_1(l)\ge 0$. From $H(X_1(l)) = H(Z)+H(X_1(l)|Z)$ it follows
that, for $\bp_l = \prob\{X_1>q^l\}$:
\begin{eqnarray*}
h_l\le \bp_l\log_q(q-1)-\bp_l\log_q\bp_l -(1-\bp_l)\log_{q}(1-\bp_l)\, .
\end{eqnarray*}
Choosing $q$ large enough so that $\bp_l\le A\,q^{-\eps}\le 1/2$
for all $l\ge 1$, we can upper bound $-(1-\bp_l)\log_{q}(1-\bp_l)$
by $2\bp_l$, thus getting
\begin{eqnarray*}
h_l\le 3\bp_l-\bp_l\log_q\bp_l \, ,
\end{eqnarray*}
which implies  Eq.~(\ref{eq:DigitEntropy1}) for  $\bp_l\le A\,q^{-l\eps}$.)

In order to prove Eq.~(\ref{eq:DigitEntropy2}), first notice that
$H(X_1) = H(\{X_1(l)\})\le \sum_{l\ge 0} H(X_1(l))$ whence
$h_2(p)\le\sum_{l\ge 0} h_l\log_2 q$.
By the same argument $h_2(p)\ge h_0\log_2 q$.
The thesis follows by bounding $\sum_{l\ge 0} h_l$
using Eq.~(\ref{eq:DigitEntropy1}).
\end{proof}

\begin{proof}[Proof: Lemma \ref{lemma:TypicalSize}.]
The number of vectors with type $\theta$ is upper bounded by
$q^{nH(\theta)}$. Since there are at most $(n+1)^q$ distinct
types, $|\Typ_n(p_*)|\le  q^{nH(p_*)+nK_n}$ where
\begin{eqnarray*}
K_n \equiv \sup_{\theta}\{H(\theta)-H(p_*):  D(\theta||p_*)\le n^{-\gamma}\}+\frac{\log_q(n+1)^q}{n}\, .
\end{eqnarray*}
The bound $H(\theta)-H(p_*)\le ||\theta-p_*||_1\log(q/||\theta-p_*||)$
and $||\theta-p_*||\le \sqrt{2 D(\theta||p_*)}$ \cite{CoverThomas}.

Equation (\ref{eq:Sanov}) is just Sanov Theorem.
\end{proof}

\section{Acknoledgments} Yi Lu is supported by the Cisco Stanford Graduate Fellowship.
%
%

\end{document}